\newcommand{\keywords}[1]{%
  \par\noindent\textbf{Keywords: } #1
}
\newcommand{\APS}{\mathsf{APS}}
\newcommand{\MMS}{\mathsf{MMS}}
\newcommand{\WMMS}{\mathsf{WMMS}}
\newtheorem{lemma}{Lemma}
\newtheorem{corollary}{Corollary}
\newtheorem{theorem}{Theorem}
\newtheorem{definition}{Definition}
\newtheorem{proposition}{Proposition}
\title{On the Fair Allocation to Asymmetric Agents with Binary XOS Valuations}
\author{
Ziheng Chen$^{1,2,}$\thanks{These authors contributed equally to this work.} ,
Bo Li$^{3,*}$,
Zihan Luo$^{1,2,*}$,
Jialin Zhang$^{1,2,*,}$\thanks{Corresponding author. zhangjialin@ict.ac.cn} \\
{\small
$^{1}$ Institute of Computing Technology, Chinese Academy of Sciences \\
$^{2}$ University of Chinese Academy of Sciences \\
$^{3}$ Department of Computing, The Hong Kong Polytechnic University
}
}
\newcommand{\BibTeX}{\rm B\kern-.05em{\sc i\kern-.025em b}\kern-.08em\TeX}
\begin{document}

%%% The next command prints the information defined in the preamble.

\maketitle 

%%% Use this environment to specify a short abstract for your paper.

\begin{abstract}
We study the problem of allocating $m$ indivisible goods among $n$ agents, where each agent's valuation is fractionally subadditive (XOS). 
With respect to AnyPrice Share (APS) fairness, \citet{KulkarniKM24} showed that, when agents have binary marginal values, a $0.1222$-APS allocation can be found in polynomial time, and there exists an instance where no allocation is better than $0.5$-approximate APS.
Very recently, \citet{feige2025fair} extended the problem to the asymmetric case, where agents may have different entitlements, and improved the approximation ratio to $1/6$ for general XOS valuations.
In this work, we focus on the asymmetric setting with binary XOS valuations, and further improve the approximation ratio to  $1/2$, which matches the known upper bound. We also present a polynomial-time algorithm to compute such an allocation.
Beyond APS fairness, we also study the weighted maximin share (WMMS) fairness. \citet{farhadi2019fair} showed that, a $1/n$-WMMS allocation always exists for agents with general additive valuations, and that this approximation ratio is tight. We extend this result to general XOS valuations, where a $1/n$-WMMS allocation still exists, and this approximation ratio cannot be improved even when marginal values are binary. This shows a sharp contrast to binary additive valuations, where an exact WMMS allocation exists and can be found in polynomial time.
\end{abstract}

%%% Use this command to specify a few keywords describing your work.
%%% Keywords should be separated by commas.

\keywords{fair allocation, XOS, APS, WMMS}

%%%%%%%%%%%%%%%%%%%%%%%%%%%%%%%%%%%%%%%%%%%%%%%%%%%%%%%%%%%%%%%%%%%%%%%%

\section{Introduction}

In this paper, we study the fair allocation of $m$ indivisible goods among $n$ agents.
A central and widely accepted fairness criterion in this problem is the maximin share (MMS), introduced by \citet{budish2011combinatorial}. However, prior work has shown that there exist instances where no allocation can guarantee MMS fairness \citep{feige2021tight,procaccia2014fair,kurokawa2018fair}. 
This raises a fundamental question in fair allocation: to what extent can MMS fairness be guaranteed to be satisfiable? 
This question has inspired a substantial body of work, and for additive, submodular, and XOS valuation functions, constant-factor approximations to MMS fairness have been established; see, for example, \citep{akrami2024breaking,uziahu2023fair,akrami2023randomized}.
\citet{babaioff2024fair} proposed an alternative fairness notion, AnyPrice share (APS), and proved the existence of a constant-approximate APS allocation under additive valuations, which was later generalized to submodular valuations \cite{uziahu2023fair}.

In addition to research on general classes of valuations, significant attention has been given to valuations with binary marginals, such as matroid-rank valuations (i.e., submodular valuations with binary marginals). 
On one hand, these binary functions are well-suited for modeling real-world scenarios -- such as the fair allocation of public housing units \cite{barman2021approximating} -- and make it easier for agents to express their preferences. 
On the other hand, they allow for stronger fairness guarantees. 
For instance, under matroid-rank valuations, an exact MMS allocation is always guaranteed to exist and can be efficiently computed \cite{barman2021approximating,viswanathan2023general}. 
Furthermore, \citet{KulkarniKM24} showed that in this setting, APS is exactly equal to MMS, ensuring that exact APS allocations are always attainable.
This naturally leads to the question: what about binary XOS functions, which generalize binary submodular functions within the complement-free hierarchy? \citet{KulkarniKM24} showed that, in this case, MMS and APS may not be the same, but they satisfy $\text{MMS} \le \text{APS} \le 2\text{MMS}+1$.
Given that the algorithm by \citet{li2021fair} guarantees a $0.366$-approximate MMS allocation, this relationship immediately implies a $0.122$-approximate APS allocation.
Unfortunately, \citet{KulkarniKM24} also showed that it is impossible to achieve an approximation better than $0.5$ for APS under binary XOS valuations.

Very recently, \citet{feige2025fair} improved upon previous results by showing that a $\frac{4}{17}(\approx 0.235)$-approximate APS allocation exists under the general XOS valuations.
They also extended their analysis to a more general setting where agents are asymmetric—that is, agents have different entitlements to the goods, with those holding higher entitlements expected to receive a larger share. This asymmetric setting is particularly relevant to real-world applications. 
As discussed in \cite{farhadi2019fair}, in many religions, cultures, and legal systems, inherited wealth is often distributed unequally, and the allocation of natural resources between neighboring countries is frequently based on geographic, economic, or political considerations.
\citet{feige2025fair} proved that, for general XOS valuations in the asymmetric setting, a $\frac{1}{6}$-approximate APS allocation is guaranteed to exist.

\begin{table*}[t]
    \centering
    \renewcommand{\arraystretch}{1.6}
    \resizebox{\textwidth}{!}{
    \begin{tabular}{|c|cc|cc|}
        \hline
        \multirow{2}*{Setting} & \multicolumn{2}{c|}{APS} & \multicolumn{2}{c|}{WMMS}  \\
        \cline{2-5}
        & Symmetric & Asymmetric& Symmetric & Asymmetric\\
        \hline
        Additive & \makecell{$\frac{3}{5}$ \\ (\citet{babaioff2024fair})} & \makecell{$\frac{3}{5}$ \\ (\citet{babaioff2024fair})} & \makecell{$\frac{10}{13}$ \\ (\citet{heidari2025improved})} & \makecell{$\frac{1}{n}$ \\  (\citet{farhadi2019fair}) }  \\
        \hline
        \makecell{{\bf Binary} \\ {\bf XOS}} & \makecell{ $0.1222 \xrightarrow[]{\text{\bf Theorem \ref{APStheorem6}}} \frac{1}{2}$ \\ (\citet{KulkarniKM24})} & \makecell{$\frac{1}{2}$\\ ({\bf Theorem \ref{APStheorem6}})} & \makecell{$0.3666\xrightarrow[\text{\citet{hummel2025maximin}}]{\text{\bf Corollary \ref{MMScorollary}}} \frac{1}{2}$ \\ (\citet{KulkarniKM24})} & \makecell{ $\frac{1}{n}$ \\ ({\bf Theorem \ref{WMMStheorem1}})} \\
        \hline
        \makecell{General\\ XOS} & \makecell{ $\frac{4}{17}$ \\ (\citet{feige2025fair})} & \makecell{$\frac{1}{6}$ \\ (\citet{feige2025fair})} & \makecell{ $\frac{4}{17}$ \\ (\citet{feige2025fair})} & \makecell{$\frac{1}{n}$ \\ ({\bf Theorem \ref{WMMStheorem1}}) }\\
        \hline
    \end{tabular}}
    \caption{Main results. We remark that all our approximations are tight. For clarity, we provide a comparison with the best-known approximations in related settings.}
    \label{tab:main-results}
\end{table*}

While the results of \citet{feige2025fair} have significantly advanced our understanding of approximate MMS and APS fairness under XOS valuations, several open questions from \cite{KulkarniKM24} remain unresolved.
First, for binary XOS valuations, the optimal approximation ratio is still unknown, which is known to be between $\frac{4}{17}$ and $\frac{1}{2}$ for the symmetric setting, and between $\frac{1}{6}$ and $\frac{1}{2}$ for the asymmetric setting.
Second, in the asymmetric setting, the relationship between APS and weighted MMS (WMMS, as defined in \cite{farhadi2019fair}), is still unknown, and the optimal approximation ratio of WMMS is unknown. 
In this paper, we answer the above two questions.

% \bo{Bo: Regarding the second question, it is better if we can give an example where WMMS is $\Theta(n)$ times as large as APS in this paper.}

% \bo{Bo: Regarding the impossibility instance for WMMS with binary XOS valuations, what do we know about binary submodular? }

\subsection{Our Contribution}

In this paper, we build upon the work of \cite{KulkarniKM24} to investigate the fair allocation of $m$ indivisible goods among $n$ agents, where agents' valuations are binary XOS.
We focus on APS and WMMS fairness, and the general setting when agents have different entitlements. 
The main results and their comparisons to the related work are summarized in Table \ref{tab:main-results}.

Our contributions are twofold. First, for APS fairness, we show that a $\frac{1}{2}$-approximate APS allocation always exists, and thus this approximation ratio is tight. 
Our result applies to the general asymmetric case; see Theorem \ref{APStheorem5}. 
In the symmetric setting, since APS is always at least as large as MMS, our result also guarantees the existence of a $\frac{1}{2}$-approximate MMS allocation. 
Furthermore, we provide a polynomial-time implementation to compute such an allocation; see Theorem \ref{APStheorem6}.

In the asymmetric setting, \citet{farhadi2019fair} first introduced WMMS fairness to account for agents with different weights. 
In this context, APS is no longer an upper bound for WMMS; indeed, there are instances (with agents having general additive valuations) where $\frac{\APS}{\WMMS}$ is infinitesimal for some agent; see Proposition \ref{WMMSAPS}. \citet{farhadi2019fair} further demonstrated that a round-robin algorithm guarantees a $\frac{1}{n}$-approximate WMMS allocation for additive valuations.
Extending beyond additive valuations, we generalize the result of \citet{farhadi2019fair} by proving that a $\frac{1}{n}$-approximate WMMS allocation always exists for arbitrary XOS valuations (Theorem \ref{WMMStheorem1}). 
Nevertheless, we show that for XOS valuations -- even when restricted to binary cases -- a better than $\frac{1}{n}$-approximate WMMS allocation cannot be guaranteed (Theorem \ref{WMMSexample}).
In contrast, we prove that an exact WMMS allocation is achievable for binary additive valuations, underscoring the inherent challenges of attaining WMMS fairness for binary XOS valuations (Theorem \ref{WMMSAddAlgo}).
% In comparison, we show that an exact WMMS allocation can be achieved for binary additive valuations, which highlights the inherent difficulties of achieving WMMS fairness for binary XOS valuations (Theorem \ref{WMMSAddAlgo}).

% that the $\frac{1}{n}$-WMMS guarantee under binary XOS valuations is non-trivial 

\subsection{Related Work}
%Binary XOS valuations have been extensively studied in the fair allocation context. For the MMS notion,
%\citet{li2021fair} showed an algorithm to calculate $0.3667$-MMS allocation under binary XOS valuations. For the APS notion, \citet{KulkarniKM24} proved that when agents have binary submodular(matroid rank) valuations and symmetric entitlements $\frac{1}{n}$, there is $\text{APS}=\text{MMS}$. Furthermore, when agents have binary XOS valuations and symmetric entitlements, 
%\citet{KulkarniKM24} also proved the upper bound of APS is $\frac{1}{2}$, and the relationship between APS and MMS is $\text{APS}\le 2 \text{MMS}+1$. Therefore, the algorithm proposed by \citet{li2021fair} also gives $0.1222$-APS with symmetric entitlements. Other fairness notions also have been studied for binary valuations, such as Nash social welfare \citep{barman2021approximating}, and the EFX \citep{bu2023efx}.
%gives a 288-approximation algorithm for maximizing Nash social welfare under binary XOS valuations.

%has proved that the EFX allocation always exists when valuation functions are binary. 

% A comprehensive introduction to algorithmic fair allocation of indivisible items can be found in \cite{moulin2019fair,amanatidis2023fair}. 
% In the following, we review the algorithmic results on MMS and APS fairness. 
%the most closely related work.

MMS fairness was first introduced in \citet{budish2011combinatorial} as a relaxation of proportionality \cite{steinhaus1949division}. However, there are instances for which no MMS allocation exists \cite{feige2021tight,procaccia2014fair,kurokawa2018fair}. 
%Hence, multiple studies have explored approximation guarantees.
For additive valuations, \citet{kurokawa2018fair} proved the existence of $\frac{2}{3}$-MMS allocations. Subsequently, \citet{ghodsi2018fair} improved the approximation ratio to $\frac{3}{4}$, which was further improved to $\frac{3}{4} + o(1)$ by \citet{garg2020improved,akrami2023simplification}, $\frac{3}{4} + \frac{3}{3836}$ by \citet{akrami2024breaking}, and most recently to $\frac{10}{13}$ by \citet{heidari2025improved}.
For submodular valuations, 
\citet{barman2020approximation} and \citet{ghodsi2018fair}, respectively, designed algorithms to compute $0.21$-MMS and $\frac{1}{3}$-MMS allocations.  \citet{ghodsi2018fair} showed that a better than $\frac{3}{4}$-MMS allocation may not exist.
Recently, \citet{uziahu2023fair} improved the approximation ratio to $\frac{10}{27}$.
% \citet{chekuri20241} provided $\frac{1}{2}$-MMS allocation with equal entitlement for SPLC valuations, which is a special form of submodular valuations and a generalization of additive valuations.
For XOS valuations, \citet{ghodsi2018fair} proved that the best possible approximation ratio is between $\frac{1}{5}$ and $\frac{1}{2}$. 
Later, \citet{akrami2023randomized} designed an algorithm computing $0.23$-MMS allocations.
For general subadditive valuations, approximate MMS fair algorithms are given in \cite{ghodsi2018fair, seddighin2024improved,feige2025concentration,seddighin2025beating, feige2025multi}.
Finally, when agents have asymmetric entitlements, \citet{farhadi2019fair} introduced weighted MMS fairness, and proved that the best possible approximation ratio for additive valuations is $\frac{1}{n}$.

\citet{babaioff2024fair} introduced APS fairness, which addresses some of the modeling concerns of MMS, especially when agents have different weights.
% was first presented in \citet{babaioff2024fair}, which demonstrated that APS with equal entitlement $\frac{1}{n}$ dominates MMS in the context of non-negative valuations. 
They designed a $0.667$-APS algorithm for additive valuations. 
Later, \citet{uziahu2023fair} extended the setting to submodular valuations and presented a $\frac{1}{3}$-approximation algorithm.
Recently, \citet{feige2025fair} showed the existence of $\frac{1}{6}$-APS allocations for XOS valuations, and improved the approximation ratio to $\frac{4}{17}$-APS allocation when agents have equal entitlements.

Although all the above work considers the allocation of goods, the minor problem of chores has also been widely studied. 
For example, with equal entitlements, the MMS fairness is studied in \cite{aziz2017algorithms,DBLP:conf/sigecom/HuangL21,DBLP:conf/sigecom/HuangS23} for additive valuations and in \cite{DBLP:conf/nips/0037WZ23} for more general valuations.
With asymmetric entitlements, the WMMS fairness is studied in \cite{aziz2019maxmin,wang2024improved}, and APS fairness is studied in \cite{babaioff2024fair,li2022almost,feige2023picking}.

\section{Preliminaries}

We study the problem of fairly allocating a set of $m$ indivisible goods among $n$ agents. 
Let $[k] = \{1, 2, \dots, k\}$ for a positive integer $k$.
Denote the set of goods by $M:=\{g_1, g_2, \dots, g_m\}$ and the set of agents by $N:=\{a_1, a_2, \dots, a_n\}$. The preference of each agent $a_i \in N$ is defined by a valuation function $v_i: 2^{M} \rightarrow \mathbb{R}_{\geq 0}$ over the set of goods. 
Specifically, $v_i(S)$ is the value that $a_i$ has for the subset of goods $S \subseteq M$. Furthermore, we emphasize the imbalance between agents with their entitlements $(b_i)_{i\in [n]}$. Therefore, we represent a fair allocation instance by the quadruple $(N,M,(v_i)_{i \in n}, (b_i)_{i\in [n]})$.

An allocation, $A := (A_1, \ldots, A_n)$, is a partition of all the goods among the $n$ agents, i.e., $A_i \cap A_j = \varnothing$ for all $i \neq j$ and $\bigcup_{i \in [n]} A_i = M$. We also define a partial allocation, denoted by $P = (P_1, \ldots, P_n)$, as a partition of any subset of goods, where $P_i \cap P_j = \varnothing$ for all $i \neq j$ and $\bigcup_{i \in [n]} P_i \subseteq M$.

\subsection{Fairness Notions}

For the sake of simplicity, in a fair allocation instance $(N,M,(v_i)_{i \in n}, (b_i)_{i\in [n]})$, let
\begin{equation*}
    \Pi_n = \{S = (S_i)_{i\in [n]} \mid M = \cup_{i\in [n]} S_i, S_i\cap S_j = \varnothing, i\ne j\}
\end{equation*}
be the set of all partitions of $M$ to $n$ bundles.

The maximin share is an analogy of the cake-cutting problem, where the cutter could only expect to obtain the minimum share of the cake. Therefore she need to find the allocation that would maximize the minimum share. In the unweighted version of maximin share, we treat all agents equally. Namely, each agent has the same entitlement $\frac{1}{n}$.

\begin{definition}[Maximin Share]\label{defMMS}
    For a fair allocation instance $(N,M,(v_i)_{i \in [n]}, (\frac{1}{n})_{i\in [n]})$, the maximin share (MMS) of $a_i$ is
    \begin{equation*}
        \MMS_i^N (M) = \max_{S\in \Pi_n} \min_{j\in [n]} v_i(S_j).
    \end{equation*}
\end{definition}

As an extension, the weighted maximin share takes the entitlements into consideration. Now the cutter indeed proposes an allocation according to her valuation function and the agents' entitlements instead of just a partition.

\begin{definition}[Weighted Maximin Share]\label{defWMMS}
    For a fair allocation instance $(N,M,(v_i)_{i \in [n]}, (b_i)_{i\in [n]})$, the weighted maximin share (WMMS) of $a_i$ is
    \begin{equation*}
        \WMMS_i^N (M) = \max_{S\in \Pi_n} \min_{j\in [n]} v_i(S_j) \frac{b_i}{b_j}.
    \end{equation*}
\end{definition}

It is obvious that when all entitlements are equal, the weighted maximin share degenerates to the maximin share. Furthermore, we call the partition that reaches the WMMS value the WMMS partition.

\begin{definition}[WMMS Partition] \label{defWMMSpart}
    For a fair allocation instance $(N,M,(v_i)_{i \in [n]}, (b_i)_{i\in [n]})$, a WMMS partition of $a_i$ is
    \begin{equation*}
        S^i \in \mathop{\arg\max}_{S\in \Pi_n} \min_{j\in [n]} v_i(S^i_j) \frac{b_i}{b_j},
    \end{equation*}
    where $S^i_j$ is the bundle $a_i$ wants to assign to $a_j$.
\end{definition}

Informally, AnyPrice Share indicates the value that an agent $a_i$ can guarantee herself at any given price of items, based on her entitlement $b_i$.

\begin{definition}[AnyPrice Share]
Let $\mathcal{P}$ represent the price vector simplex corresponding to the set of goods $M$, or formally 
\begin{equation*}
    \mathcal{P} = \left\{p = (p_1, p_2, \ldots, p_m) \geq 0 \bigm| \sum_{i\in [m]} p_i = 1 \right\}.
\end{equation*}
For a fair allocation instance $ (N, M, (v_i)_{i \in [n]},(b_i)_{i \in [n]})$, the APS value of $a_i$ is
\begin{equation*}
    \APS_i^N (M):=\min_{p\in \mathcal{P}}\max_{S\subseteq M,p(S)\leq b_i}v_i(S),
\end{equation*}
where $p(S)$ is the sum of prices of goods in $S$.
\end{definition}

For brevity we will refer to $(\mathsf{W})\MMS_i^N(M)$ as $(\mathsf{W})\MMS_i$ and $\APS_i^N(M)$ as $\APS_i$ if the instance is clear by the context.
%As shown in \citet{babaioff2024fair}, the AnyPrice share has the following equivalent definition.

%\begin{definition}[\commentb{Any Price share dual definition}]
%The $\APS_i$ for agent $i$ with a given instance $(N, M, (v_i)_{i \in [n]}, (b_i)_{i \in [n]})$ represents the maximum value $z$ that can be achieved by selecting nonnegative weights $\lambda_T$ for each $T \subseteq M$ that sum to $1$ (creating a distribution over the sets). In this context, any set $T$ with a value below $z$ is assigned a weight of zero, while the total weight of any good across the sets must not exceed $b_i$:
%\begin{align*}
%\quad & \APS_i=\max z \\
%\text{subject to} \quad & %\sum_{T\subseteq M} 
%\lambda_T=1 \\
%   & \lambda_T\geq 0,\ \forall T\subseteq M\\
%  & \lambda_T=0,\ \forall T\subseteq M\ s.t\ v_i(T)<z \\
%& \sum_{T:j\in T} \lambda_T\leq b_i,\ \forall j \in M 
%\end{align*}
%\end{definition}

We emphasize that in the definition of (weighted) maximin share and AnyPrice share of some $a_i$, only her own value function counts. The other agents' participation is limited to their entitlements.

An allocation $A$ is $\alpha$-APS (similarly, $\alpha$-(W)MMS) if and only if $v_i(A_i) \geq \alpha \APS_i$ (similarly, $v_i(A_i) \geq \alpha (\mathsf{W})\MMS_i$) for every agent $i \in N$.

\subsection{Valuation Functions}

This paper focuses on XOS valuations with binary marginals (or binary XOS valuations). 

\begin{definition}[Additive Function]
A function $v: 2^{M} \to \mathbb{R}_{\geq 0}$ is additive if the value of a set of goods is equal to the sum of the values of the goods in the set, that is, $v(S) = \sum_{g \in S} v(\{g\})$.
\end{definition}

\begin{definition}[XOS Function]
A function $v: 2^{M} \to \mathbb{R}_{\geq 0}$ is XOS, or fractionally subadditive,  iff there exists a collection of additive functions $\{l_t\}_{t\in [L]}$ such that, for every subset $S\subseteq M$, we have $v(S)=\max_{t\in [L]}l_t(S)$. Notice that the number of additive functions, $|L|$, can be exponentially large in $m$.
\end{definition}

Given an XOS function $v$, we suppose there is a query oracle $\mathcal{O}$ that receives a set $S$ as input and computes $v(S)$ in time $O(1)$.

Furthermore, a valuation $v$ has binary marginals if and only if $v(S \cup \{g\}) - v(S) \in \{0, 1\}$ for any subset of goods $S \subseteq M$ and any goods $g \in M$. As a direct consequence, the valuations we consider are monotonic: $v(S) \leq v(T)$ for any subsets $S \subseteq T \subseteq M$. In addition, we require the valuations to satisfy $v_i(\varnothing) = 0$ for each $i \in [n]$.

When valuation $v$ has binary marginals, a set of goods $S\subseteq M$ is said to be non-wasteful respect to valuation $v$, iff $v(S) = |S|$. \citet{barman2021approximating} proved that for any binary XOS function $v$ and every set $S\subseteq M$, there exists a non-wasteful subset $X\subseteq S$ with the property that $v(X) = |X| = v(S)$, and the non-wasteful set $X$ can be computed efficiently in time $O(m)$.

\section{APS with Binary XOS Valuations}

In this section, we focus on computing approximately APS allocations under binary XOS valuations. 
We first prove the existence of a $\frac{1}{2}$-APS allocation in Section \ref{sec:aps:existence}, and then give a polynomial time implementation in Section \ref{sec:aps:polytime}.
Combining with the upper bound in \cite{KulkarniKM24}, our result is tight. 
% Since the upper bound of APS provided by \cite{KulkarniKM24} is also $\frac{1}{2}$, our result is tight. 
Since APS is upper-bounded by MMS when agents have equal entitlements, our result implies a polynomial-time algorithm computing $\frac{1}{2}$-MMS allocations in the symmetric setting. 

% allocation under binary XOS valuation functions, which can also be computed in polynomial time. This represents an improvement over the existing MMS results.

% the proof is constructive, as shown in Algorithm \ref{APSexist5}. 
% The algorithm requires the input of the APS values of all agents, which 

\subsection{The Existence of $\frac{1}{2}$-APS Allocations}
\label{sec:aps:existence}
Our proof is constructive, and the algorithm is formally described in Algorithm \ref{APSexist5}. The algorithm returns a $\frac{1}{2}$-APS allocation if the APS values of all the agents are known. To make it a subprocess of our polynomial-time algorithm \ref{APSpoly5}, we let Algorithm \ref{APSexist5} output an unsatisfied agent if the input guessing values of APS' are too large.
% implies the existence of a $\frac{1}{2}$-APS allocation, since without loss of generality, we can resort the agents before input for all instances.
% An instance $(N,M,(v_i)_{i\in [n]},(b_i)_{i\in [n]})$ with binary XOS valuations and the 

Suppose the APS' of the agents are $s_1, s_2,\ldots,s_n$, then we can rename the agents such that 
\begin{align}\label{eq:APS:existence:1}
    \frac{\lceil\frac{1}{2}s_1\rceil}{b_1}\le \frac{\lceil\frac{1}{2}s_2\rceil}{b_2}\le \dots\le \frac{\lceil\frac{1}{2}s_n\rceil}{b_n}.
\end{align}
We prove that we can allocate each agent a non-wasteful bundle with value at least half of their APS.

\begin{lemma} \label{APSequalprice5}
    For some agent $a_i$ in a fair allocation instance $(N,M,(v_i)_{i \in [n]}, (b_i)_{i\in [n]})$ with binary XOS valuations, suppose $S\subseteq M$ is a subset of goods, then there exists a non-wasteful bundle $B\subseteq M\setminus S$, satisfying $v_i(B) = |B| = \APS_i - \lfloor b_i|S|\rfloor$.
\end{lemma}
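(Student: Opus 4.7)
The plan is to use the APS definition against a carefully chosen price vector that makes all goods in $M\setminus S$ free, so the agent's APS-certified bundle must essentially live in $M\setminus S$.

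First I would set up the price vector $p\in\mathcal{P}$ that is uniform on $S$ and zero elsewhere: $p_g = 1/|S|$ for $g\in S$ and $p_g = 0$ for $g\notin S$ (handling $S=\varnothing$ separately, where the claim reduces to finding a non-wasteful subset of $M$ of size $\APS_i$, which follows directly from $\APS_i\le v_i(M)$ and the Barman--Krishnamurthy lemma). By the definition of $\APS_i$, there exists $T\subseteq M$ with $p(T)\le b_i$ and $v_i(T)\ge \APS_i$. Since $p(T)=|T\cap S|/|S|$, the budget constraint forces $|T\cap S|\le \lfloor b_i|S|\rfloor$.

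Next I would decompose $T = T' \cup (T\cap S)$ with $T':=T\setminus S\subseteq M\setminus S$, and bound $v_i(T')$ from below. Because XOS functions are subadditive and binary marginals imply $v_i(X)\le |X|$ for every $X$, I get
\begin{equation*}
v_i(T) \;\le\; v_i(T') + v_i(T\cap S) \;\le\; v_i(T') + |T\cap S| \;\le\; v_i(T') + \lfloor b_i|S|\rfloor,
\end{equation*}
hence $v_i(T') \ge \APS_i - \lfloor b_i|S|\rfloor$. Applying the result of \citet{barman2021approximating} cited in the preliminaries to $T'$ then yields a non-wasteful $\widehat{B}\subseteq T'\subseteq M\setminus S$ with $|\widehat{B}| = v_i(\widehat{B}) = v_i(T') \ge \APS_i - \lfloor b_i|S|\rfloor$.

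Finally, I would shrink $\widehat{B}$ to the exact target size. A short argument using subadditivity and binary marginals shows that every subset of a non-wasteful set is itself non-wasteful: if $B'\subseteq \widehat{B}$, then $|\widehat{B}| = v_i(\widehat{B}) \le v_i(B') + v_i(\widehat{B}\setminus B') \le v_i(B') + |\widehat{B}\setminus B'|$, so $v_i(B')\ge |B'|$, and the reverse inequality is automatic from binary marginals. Thus any $B\subseteq\widehat{B}$ of size $\APS_i - \lfloor b_i|S|\rfloor$ satisfies the lemma (and the degenerate case $\APS_i\le \lfloor b_i|S|\rfloor$ is covered by $B=\varnothing$). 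The only delicate step is verifying that the uniform-on-$S$ price vector legitimately cuts the APS certificate into an $M\setminus S$-part and a cheap $S$-part; once that is in place, binary XOS subadditivity does the rest.
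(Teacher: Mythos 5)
Your proposal is correct and follows essentially the same route as the paper's proof: the uniform-on-$S$ price vector, the budget bound $|T\cap S|\le\lfloor b_i|S|\rfloor$, and the non-wasteful-subset lemma of \citet{barman2021approximating}. You merely make explicit two steps the paper leaves implicit (the subadditivity argument giving $v_i(T\setminus S)\ge \APS_i-\lfloor b_i|S|\rfloor$, and the fact that a non-wasteful set can be shrunk to any target size while remaining non-wasteful), which is a welcome tightening rather than a different approach.
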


\begin{proof}
    If $S = \varnothing$, the lemma is straightforward according to the definition of APS. Otherwise, consider the following price for all goods $l$,
    \begin{equation*}
        p_l = \begin{cases}
            \frac{1}{|S|}, &l\in S, \\
            0, &l\notin S.
        \end{cases}
    \end{equation*}
    With her entitlement $b_i$, $a_i$ could at most afford $\lfloor b_i|S|\rfloor$ goods in $S$. While by the definition of APS, $a_i$ could always afford at least $\APS_i$ value of items in $M$ with any price. Therefore, for remaining goods, we have $v_i(M\setminus S) \geq \APS_i - \lfloor b_i|S|\rfloor$. Then by \citet{barman2021approximating}, we could strip some goods off $M\setminus S$ to get the demanded non-wasteful bundle $B$, where $v_i(B)=|B| = \APS_i - \lfloor b_i|S|\rfloor$.
\end{proof}

\begin{algorithm}
\caption{$\frac{1}{2}$-APS Existence} \label{APSexist5}
\begin{algorithmic}[1]
\Require An instance $(N,M,(v_i)_{i\in [n]},(b_i)_{i\in [n]})$ with binary XOS valuations, where the APS values $s_1, s_2,\ldots,s_n$  satisfy Inequality \ref{eq:APS:existence:1}.
% \[
% \frac{\lceil\frac{1}{2}s_1\rceil}{b_1}\le \frac{\lceil\frac{1}{2}s_2\rceil}{b_2}\le \dots\le \frac{\lceil\frac{1}{2}s_n\rceil}{b_n}.
% \]
\Ensure If there exists a $\frac{1}{2}$-APS allocation, return this allocation $A=(A_1,A_2, \ldots, A_n)$. Otherwise, return an unsatisfied agent $a_i$.
%\Function{HalfAPSExistence}{$N, M, v_1, v_2, \dots, v_n, \APS_1, \APS_2, \dots, \APS_n$}
    \For {$i\gets 1, 2, \dots, n$}
        \If {$v_i(M)\ge \lceil\frac{1}{2}s_i\rceil$}
            \State Assign $a_i$ a non-wasteful bundle $A_i$ from $M$, where $v_i(A_i) = |A_i| = \lceil\frac{1}{2}s_i\rceil$. \label{APSexistalg}
            \State Set $M\gets M\setminus A_i$.
        \Else\label{non-wasteful}
            \State \Return $a_i$
        \EndIf
    \EndFor
    \If {$i=n$ and $M\ne \varnothing$}
        \State Allocate the remaining items arbitrarily.
        \State\Return $A = (A_1, A_2, \dots, A_n)$.
    \EndIf
    
%\EndFunction
\end{algorithmic}
\end{algorithm}

\begin{theorem} \label{APStheorem5}
    When the estimated APS values are accurate for all agents, which means $s_i=\APS_i$ for $i\in [n]$, Algorithm \ref{APSexist5} computes a $\frac{1}{2}$-APS allocation for each agent $a_i$ of the fair allocation instance $(N,M,(v_i)_{i\in [n]},(b_i)_{i\in [n]})$, where $(v_i)_{i\in [n]}$ are binary XOS valuations.
\end{theorem}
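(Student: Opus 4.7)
The plan is to prove by induction on $i$ (in the ordering of Inequality \ref{eq:APS:existence:1}) that when Algorithm \ref{APSexist5} reaches agent $a_i$, the remaining goods $M\setminus S_i$, where $S_i:=\bigcup_{j<i} A_j$ is the set of items already distributed, satisfy $v_i(M\setminus S_i)\geq \lceil \tfrac{1}{2}\APS_i\rceil$. Granted this invariant, the if-branch is always taken and no agent is ever returned in Line \ref{non-wasteful}. A non-wasteful subset of $M\setminus S_i$ of value $\geq \lceil\tfrac{1}{2}\APS_i\rceil$ exists by \citet{barman2021approximating}, and trimming it down to exactly $\lceil\tfrac{1}{2}\APS_i\rceil$ goods gives the bundle $A_i$ assigned in Line \ref{APSexistalg} with $v_i(A_i)=\lceil\tfrac{1}{2}\APS_i\rceil\geq \tfrac{1}{2}\APS_i$, delivering the claimed $\tfrac{1}{2}$-APS allocation.

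For the inductive step I would apply Lemma \ref{APSequalprice5} with $S=S_i$, which produces a non-wasteful bundle in $M\setminus S_i$ of cardinality $\APS_i-\lfloor b_i|S_i|\rfloor$. This reduces the required invariant to the arithmetic inequality $\lfloor b_i|S_i|\rfloor \leq \lfloor\tfrac{1}{2}\APS_i\rfloor$, or equivalently $b_i|S_i|<\lfloor\tfrac{1}{2}\APS_i\rfloor+1$. Here the ordering in Inequality \ref{eq:APS:existence:1} is the key ingredient: for every $j<i$ one has $\lceil\tfrac{1}{2}\APS_j\rceil\leq \tfrac{b_j}{b_i}\lceil\tfrac{1}{2}\APS_i\rceil$. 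Summing over $j<i$ and using the normalization $\sum_{j=1}^n b_j=1$ (so $\sum_{j<i} b_j \leq 1-b_i$) yields
\begin{equation*}
    b_i|S_i| = b_i\sum_{j<i}\lceil \tfrac{1}{2}\APS_j\rceil \;\leq\; \lceil\tfrac{1}{2}\APS_i\rceil(1-b_i) \;<\; \lceil\tfrac{1}{2}\APS_i\rceil \;\leq\; \lfloor\tfrac{1}{2}\APS_i\rfloor+1,
\end{equation*}
closing the induction.

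The main obstacle is the tight rounding bookkeeping: absorbing the unit loss from the floor requires both the strict inequality $1-b_i<1$ (coming from $b_i>0$) and the elementary identity $\lceil\tfrac{1}{2}x\rceil\leq \lfloor\tfrac{1}{2}x\rfloor+1$, so the margin above is essentially zero. A minor but necessary side verification is that in binary XOS valuations every subset of a non-wasteful set is itself non-wasteful; this follows from monotonicity combined with the binary marginal property $v_i(T\cup\{g\})-v_i(T)\in\{0,1\}$, and it is what justifies carving the oracle-returned non-wasteful bundle down to the exact target size $\lceil\tfrac{1}{2}\APS_i\rceil$. The degenerate case $\APS_i=0$ is handled trivially since the target bundle is empty.
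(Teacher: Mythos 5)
Your proposal is correct and follows essentially the same route as the paper's proof: the same induction over agents in the order of Inequality \ref{eq:APS:existence:1}, the same invocation of Lemma \ref{APSequalprice5} with $S=\bigcup_{j<i}A_j$, and the same floor/ceiling arithmetic exploiting the sorting and $\sum_{j<i}b_j<1$ to show the residual value is at least $\lceil\tfrac{1}{2}\APS_i\rceil$. Your added observations (subsets of non-wasteful sets are non-wasteful, and the degenerate case $\APS_i=0$) are valid and only make explicit details the paper delegates to the cited extraction procedure of \citet{barman2021approximating}.
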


\begin{proof}
    In Algorithm \ref{APSexist5} we claimed the existence of $(A_i)_{i\in [n]}$ in line \ref{APSexistalg} when $s_i=\APS_i$ for $i\in [n]$. Now we prove by induction. For $a_1$, it is obvious that there is a bundle $B_1$ where $v_1(B_1) = |B_1| = \APS_1 \ge \lceil\frac{1}{2}\APS_1\rceil$. Then similar to the proof of lemma \ref{APSequalprice5}, by \citet{barman2021approximating},  we can find a non-wasteful bundle $A_1\subseteq B_1$ such that $v_1(A_1) = |A_1|=\lceil \frac{1}{2} \APS_1\rceil$.
     
    We then assume that in the first $k$ iterations where $1\le k \le n - 1$, $(a_i)_{i\in [k]}$ have received their bundles $(A_i)_{i\in [k]}$ satisfying $v_i(A_i)=|A_i| = \lceil\frac{1}{2}\APS_i\rceil$. We will prove that $a_{k+1}$ could obtain a bundle $A_{k+1}$ where $v_{k+1}(A_{k+1})=|A_{k+1}| = \lceil\frac{1}{2}\APS_{k+1}\rceil$.
    
    Let $S_{k + 1} = \cup_{i = 1}^k A_i$ be the set of items that has been taken by the first $k$ agents. Substitute $S$ to $S_{k + 1}$ in lemma \ref{APSequalprice5}, we know that $v_{k + 1}(M\setminus S_{k + 1})\ge \APS_{k+1}-\lfloor b_{k+1}|S_{k+1}|\rfloor$. Therefore, if $\APS_{k+1}-\lfloor b_{k+1}|S_{k+1}|\rfloor\ge \lceil\frac{1}{2}\APS_{k+1}\rceil$, $a_{k+1}$ could receive a bundle $A_{k + 1}$ as demanded.
    \begin{align}
        &\APS_{k+1} - \left\lfloor b_{k+1}|S_{k+1}| \right\rfloor \notag \\
        =& \APS_{k+1}-\left\lfloor b_{k+1}\sum_{i=1}^k|A_{i}|\right\rfloor \notag \\
        =& \APS_{k + 1} - \left\lfloor b_{k+1}\sum_{i=1}^k \left\lceil \frac{1}{2}\APS_i\right\rceil \right\rfloor \notag \\
        =&  \APS_{k + 1} - \left\lfloor b_{k+1}\sum_{i=1}^k \frac{\left\lceil \frac{1}{2}\APS_i\right\rceil} {b_i}b_i\right\rfloor \notag \\
        \ge& \APS_{k + 1} - \left\lfloor b_{k+1}\sum_{i=1}^k \frac{\left\lceil \frac{1}{2}\APS_{k+1}\right\rceil} {b_{k+1}}b_i\right\rfloor \label{order}\\
        =& \APS_{k + 1} - \left\lfloor \left(\sum_{i=1}^k b_i\right)\left\lceil \frac{1}{2}\APS_{k+1}\right\rceil \right\rfloor \notag \\
        \ge& \APS_{k + 1} - \left\lfloor \frac{1}{2}\APS_{k+1} \right\rfloor \label{entitlement} \\
        =& \left\lceil \frac{1}{2}\APS_{k + 1} \right\rceil. \notag 
    \end{align}

    The inequality \ref{order} holds since agents are sorted in ascending order by $\frac{\lceil \frac{1}{2}\APS_{i}\rceil} {b_{i}}$ for $1\le i \le n$. The inequality \ref{entitlement} holds since $k\le n-1$ and therefore $\sum_{i=1}^k b_i<1$.

    By induction, we have proved that each $a_i\in N$ can take a non-wasteful bundle $A_{i}$ where $v_{i}(A_{i}) =|A_{i}|= \lceil \frac{1}{2} \APS_{i} \rceil$.
\end{proof}

\subsection{A Polynomial-time Algorithm}
\label{sec:aps:polytime}
Algorithm \ref{APSexist5} assumes that we are given the APS values of all agents.
However, it is unclear whether APS values can be computed in polynomial time.
In the following, we provide a polynomial-time implementation, as described in Algorithm \ref{APSpoly5}.
Intuitively, Algorithm \ref{APSpoly5} iteratively guesses the APS values, beginning with their trivial upper bounds $s_i = \lfloor b_i m\rfloor$, and invokes Algorithm \ref{APSexist5} to check if a desired allocation is returned. 
If not, an unsatisfied agent $a_i$ is returned, indicating that the current guess for $a_i$'s APS value is too high. Consequently, $s_i$ is decreased, and the process continues in the next iteration.

% uses Algorithm \ref{APSexist5} as a subalgorithm, and demonstrate that it can compute a $\frac{1}{2}$-APS allocation with a time complexity of $O(mn(m + \log n))$. 

% Intuitively, $s_i$ in Algorithm \ref{APSpoly5} serves as a guess of $\APS_i$. The guess starts from a coarse upper bound and gradually decreases. As long as the decrease is not too much, the resulting allocation would be $\frac{1}{2}$-APS. 

\begin{algorithm}
\caption{$\frac{1}{2}$-APS Polynomial-Time Algorithm} \label{APSpoly5}
\begin{algorithmic}[1]
\Require An instance with binary XOS valuations $(N,M,(v_i)_{i\in [n]},(b_i)_{i\in [n]})$
\Ensure An $\frac{1}{2}$-APS allocation $A=(A_1, A_2, \ldots, A_n)$
%\Function{HalfAPSAllocation}{$(N,M,(v_i)_{i\in [n]},(b_i)_{i\in [n]})$}
    
    \State Set $M_0 \gets M$.
    \For {$i\gets 1, 2, \dots, n$}
        \State Initialize $s_i = \lfloor b_i m\rfloor$. \Comment{$Upper\ bounds\ of\ APS$}
    \EndFor
    \While {$A=(A_1, A_2, \dots, A_n)$ have not been found}\label{while-loop}
        % \State Sort $N$ by the ascending order of $\frac{\lceil\frac{1}{2}s_i\rceil}{b_i}$. 
        \State Rename the agents so that 
        \[\frac{\lceil\frac{1}{2}s_1\rceil}{b_1}\le \frac{\lceil\frac{1}{2}s_2\rceil}{b_2}\le \dots\le \frac{\lceil\frac{1}{2}s_n\rceil}{b_n}.
        \]\label{sort} 
        \State Execute Algorithm \ref{APSexist5} with input $s_1,s_2,\ldots,s_n$.
        \If {Algorithm \ref{APSexist5} returns an unsatisfied agent $a_i$}
        \newline\Comment{Some $s_i$ is still too large.}
            \State Set $s_i\gets s_i - 1$.\label{decrease a}
            \State Reset $M\gets M_0$.
        \Else
            \State \Return $A=(A_1, A_2, \dots, A_n)$
        \EndIf
    \EndWhile
%\EndFunction
\end{algorithmic}
\end{algorithm}

%\begin{lemma} \label{stripping5}
 %   For an agent $i$, find a non-wasteful bundle $B$ from a goods set $M\subseteq M$ with an arbitrary value $a\leq v_i(M)$ needs $O(m)$ time.
%\end{lemma}
%\begin{proof}
%Initialize the bundle $B=M$. When $v_i(B)>a$, which means the value of bundle $B$ is more than we need, we remove goods from bundle $B$ arbitrarily until $v_i(B)=a$. This step requires at most $O(m)$ time, as we may remove up to $m$ goods. Then we find a non-wasteful set $X\subseteq B$ such that $|X|=v_i(X)=a$. The set $X$ can be found in $O(m)$ time as described in Subsection 2.2. Therefore, the overall time complexity of the process is $O(m)$.
%\end{proof}

\begin{lemma}\label{APSpoly}
    The time complexity of Algorithm \ref{APSpoly5} is $O(mn(m + \log n))$.
\end{lemma}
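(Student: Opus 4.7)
The plan is to bound the two independent quantities that govern the running time: (i) the total number of iterations of the outer \texttt{while} loop on line \ref{while-loop}, and (ii) the cost of a single iteration, and then multiply them.

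For (i), the key observation is a potential argument on the vector $(s_1,\dots,s_n)$. Every failing iteration decreases exactly one coordinate $s_i$ by $1$ (line \ref{decrease a}) and never decreases any other coordinate; meanwhile, each $s_i$ remains nonnegative throughout execution, because by Theorem \ref{APStheorem5} the subroutine cannot fail once $s_i\le \APS_i$ for every $i$, so the algorithm terminates before any $s_i$ reaches $-1$. The initial value is $s_i=\lfloor b_i m\rfloor$, so
\begin{equation*}
\sum_{i\in [n]} s_i \;\le\; \sum_{i\in[n]} b_i m \;\le\; m,
\end{equation*}
using $\sum_i b_i\le 1$. Hence the \texttt{while} loop executes at most $m+1$ times in total (the $+1$ accounts for the successful terminating iteration).

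For (ii), each iteration consists of a sort on line \ref{sort} and one call to Algorithm \ref{APSexist5}. Sorting the $n$ agents by the ratio $\lceil s_i/2\rceil/b_i$ takes $O(n\log n)$. Inside Algorithm \ref{APSexist5}, the loop iterates over the $n$ agents; for each agent $a_i$ we query the oracle once to evaluate $v_i(M)$ in $O(1)$, and, on success, extract a non-wasteful bundle of size $\lceil s_i/2\rceil$ from the current $M$. By the structural result of \citet{barman2021approximating} quoted in the preliminaries, a non-wasteful subset of any set $S$ can be computed in $O(m)$ time via a peeling procedure, and then truncating it to the desired cardinality is $O(m)$. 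Thus the subroutine costs $O(nm)$, giving an overall per-iteration cost of $O(n\log n + nm)=O(n(m+\log n))$.

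Multiplying the two bounds yields a total running time of $O(m)\cdot O(n(m+\log n)) = O(mn(m+\log n))$, as claimed. The main delicacy, and the only place where the argument is not routine, is justifying the $O(m)$ bound on the number of iterations: without the global potential $\sum_i s_i$ one would only get the weaker bound $n\cdot\max_i s_i = O(nm)$ per coordinate, and in turn $O(nm)$ iterations, inflating the final complexity by a factor of $n$. The potential argument works precisely because the asymmetric APS values, weighted by entitlements, telescope against $\sum_i b_i\le 1$.
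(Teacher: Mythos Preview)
Your proof is correct and follows essentially the same approach as the paper's: both use the potential $\sum_i s_i\le \sum_i\lfloor b_i m\rfloor\le m$ to bound the number of \texttt{while} iterations by $O(m)$, and both bound the per-iteration cost by $O(n\log n)$ for the sort plus $O(nm)$ for at most $n$ non-wasteful-bundle extractions via \citet{barman2021approximating}. Your write-up is in fact slightly more careful, since you explicitly justify why the potential stays nonnegative (the paper leaves this implicit).
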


\begin{proof}
Each time the while loop in line \ref{while-loop} is executed, either a $\frac{1}{2}$-APS allocation is found, or the guessing value $s_i$ for some agent $a_i$ is decreased by $1$. The sum of the initial guessing APS values for all agents is $\sum_{i=1}^n s_i=\sum_{i=1}^n \lfloor b_i m \rfloor\le m$, so the while loop in line \ref{while-loop} can be executed at most $m$ times.

 During each iteration of the while loop, Algorithm \ref{APSexist5} executes once, since it needs to find at most $n$ times of non-wasteful bundles. If Algorithm $\ref{APSexist5}$ failed to find enough bundles, Algorithm \ref{APSpoly5} will modify the guess and rearrange the sequence of agents in the next iteration. \citet{barman2021approximating} proved that for a binary XOS function $v$ and its ground set $M$, a non-wasteful bundle with value no more than $v(M)$ can be computed in time $O(m)$. While the rearrangement process in line \ref{sort} is done by sorting $n$ numbers, the time complexity of sorting is known as $O(n\log n)$. The sorting ensures the order of the input of Algorithm \ref{APSexist5}.

We have completed our proof that Algorithm \ref{APSpoly5} requires finding non-wasteful bundles for at most $mn$ times and at most $m$ sorting operations. In total, the time complexity of Algorithm \ref{APSpoly5} is $O(mn(m + \log n))$.
\end{proof}

\begin{theorem}\label{APStheorem6}
    Algorithm \ref{APSpoly5} returns a $\frac{1}{2}$-APS allocation $A$ in polynomial-time for any fair allocation instance $(N,M,(v_i)_{i\in [n]},(b_i)_{i\in [n]})$, where $(v_i)_{i\in [n]}$ are binary XOS valuations.
\end{theorem}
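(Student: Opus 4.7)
The plan is to analyze Algorithm \ref{APSpoly5} via the loop invariant $s_i \geq \APS_i$ for every agent $a_i$, maintained throughout the execution. Combined with Lemma \ref{APSpoly}, this will yield the theorem: the polynomial-time bound is already handled by the lemma, so only correctness and termination remain, and both follow from the invariant.

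Given the invariant, correctness is immediate: when the while loop exits, agent $a_i$ holds a non-wasteful bundle $A_i$ with $v_i(A_i) = |A_i| = \lceil s_i/2\rceil \geq \lceil \APS_i/2\rceil \geq \tfrac{1}{2}\APS_i$, so $A$ is a $\tfrac{1}{2}$-APS allocation. Termination is also immediate, since $\sum_i s_i$ begins at $\sum_i \lfloor b_i m\rfloor \leq m$ and each failed iteration of the while loop strictly decreases this sum by one (line \ref{decrease a}), so the loop runs at most $m$ times.

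It remains to verify the invariant. The base case is direct: pricing every item uniformly at $1/m$ shows that under binary XOS, $\APS_i \leq \lfloor b_i m\rfloor = s_i$. For the preservation step, I will establish the following refinement of Theorem \ref{APStheorem5}: whenever Algorithm \ref{APSexist5} returns an unsatisfied agent $a_j$ in line \ref{non-wasteful}, we must have $s_j \geq \APS_j + 1$, so that decrementing $s_j$ by one preserves $s_j \geq \APS_j$. I prove this by contraposition. Fix the sorted order of line \ref{sort}, assume $s_j \leq \APS_j$, and suppose $a_1, \ldots, a_{j-1}$ have successfully been given non-wasteful bundles of size $\lceil s_i/2\rceil$; let $S_j = \bigcup_{i<j} A_i$. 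Lemma \ref{APSequalprice5} applied to $S = S_j$ gives $v_j(M \setminus S_j) \geq \APS_j - \lfloor b_j|S_j|\rfloor$. Using the ordering $\frac{\lceil s_i/2\rceil}{b_i} \leq \frac{\lceil s_j/2\rceil}{b_j}$ for $i<j$ together with $\sum_{i<j} b_i < 1$, the same floor manipulations as in the proof of Theorem \ref{APStheorem5} (with $s_i$ in place of $\APS_i$ inside the bundle sizes) give
\begin{equation*}
\lfloor b_j|S_j|\rfloor \;\leq\; \Bigl\lfloor \lceil s_j/2\rceil \textstyle\sum_{i<j} b_i \Bigr\rfloor \;\leq\; \lceil s_j/2\rceil - 1 \;\leq\; \lfloor s_j/2\rfloor,
\end{equation*}
and hence $v_j(M \setminus S_j) \geq \APS_j - \lfloor s_j/2\rfloor \geq s_j - \lfloor s_j/2\rfloor = \lceil s_j/2\rceil$, contradicting the hypothesis that $a_j$ was unsatisfied.

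The main obstacle is the ceiling/floor bookkeeping at the end of this chain: one must verify that the slack from $\sum_{i<j} b_i < 1$ is precisely what is needed to absorb the gap between $\lceil s_j/2\rceil$ and $\lfloor s_j/2\rfloor$, which requires separating the cases where $s_j$ is even and odd. Once this computation is settled, the invariant holds at every call to Algorithm \ref{APSexist5}, and the rest is a direct adaptation of Theorem \ref{APStheorem5} together with the monovariant for termination, completing the proof.
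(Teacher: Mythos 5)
Your proposal is correct and follows essentially the same route as the paper's proof: both maintain the invariant $s_i \ge \APS_i$, and both use Lemma \ref{APSequalprice5} together with the sorted order and $\sum_{i<j} b_i < 1$ to show that a returned unsatisfied agent $a_j$ must have $s_j > \APS_j$, so the decrement in line \ref{decrease a} preserves the invariant (the paper argues this directly where you argue by contraposition, which is equivalent). Your explicit verification of the base case $\lfloor b_i m\rfloor \ge \APS_i$ via uniform pricing is a small addition the paper leaves implicit, and the floor/ceiling chain you display is the same computation as the paper's Inequalities \ref{inequality1}--\ref{inequality2}.
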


\begin{proof}
    In the end of Algorithm \ref{APSpoly5}, each agent $a_i$ gets a bundle with value $\lceil\frac{1}{2}s_i\rceil$. To show that it is a $\frac{1}{2}$-APS allocation, we will prove that throughout this algorithm we have $s_i\ge \APS_i$ for each agent $a_i$.
 
    Suppose that in some iteration, some agent $a_{k+1}$ cannot obtain a bundle $A_{k+1}$ that satisfies $v_{k+1}(A_{k+1}) = \lceil\frac{1}{2} s_{k+1}\rceil$, we claim that $s_{k+1}>\APS_{k+1}$. By Algorithm \ref{APSpoly5} and Lemma \ref{APSequalprice5}, we have
    \begin{equation*}
        \APS_{k + 1} - \left\lfloor b_{k+1} \sum_{i = 1}^k \left\lceil \frac{1}{2} s_{i} \right\rceil \right\rfloor < \left\lceil\frac{1}{2} s_{k + 1}\right\rceil.
    \end{equation*}
    Therefore,
    \begin{align}
        \APS_{k + 1} &< \left\lfloor b_{k+1} \sum_{i = 1}^k \left\lceil \frac{1}{2} s_{i} \right\rceil \right\rfloor + \left\lceil\frac{1}{2} s_{k + 1}\right\rceil \notag \\
        &\le \left\lfloor b_{k+1} \sum_{i = 1}^k \frac{\left\lceil  \frac{1}{2} s_i \right\rceil} {b_i} b_i \right\rfloor + \left\lceil\frac{1}{2} s_{k + 1}\right\rceil \notag \\
        &\le \left\lfloor b_{k+1} \sum_{i = 1}^k \frac{\left\lceil  \frac{1}{2} s_{k+1} \right\rceil} {b_{k+1}} b_i \right\rfloor + \left\lceil\frac{1}{2} s_{k + 1}\right\rceil \label{inequality1}\\
        &\le \left\lfloor \sum_{i = 1}^k b_i \left\lceil \frac{1}{2} s_{k+1} \right\rceil \right\rfloor + \left\lceil \frac{1}{2} s_{k + 1}\right\rceil \notag \\
        &\le \left\lceil \frac{1}{2} s_{k + 1} \right\rceil - 1 + \left\lceil\frac{1}{2} s_{k + 1}\right\rceil \label{inequality2}\\
        &\le s_{k + 1}.\notag 
    \end{align}

    Inequality \ref{inequality1} holds since $\frac{\lceil\frac{1}{2}s_i\rceil}{b_i}\le \frac{\lceil\frac{1}{2}s_{k+1}\rceil}{b_{k+1}}$ when $i<k+1$, and inequality \ref{inequality2} holds since $k<n$ so that $\sum_{i=1}^k b_i<1$.
    
    Therefore, when some agent $a_i$ could not collect a $\frac{1}{2}$-APS bundle from the remaining goods, it holds that $s_{i} > \APS_{i}$. This indicates that the decrease of $s_i$ in line \ref{decrease a} would keep $s_{i}\ge \APS_{i}$ for every $a_i$. So in the end of Algorithm \ref{APSpoly5}, $v_i(A_i) = \lceil\frac{1}{2} s_i\rceil \ge \lceil\frac{1}{2} \APS_i\rceil$, which shows that Algorithm \ref{APSpoly5} is $\frac{1}{2}$-APS.
\end{proof}

Theorem \ref{APStheorem6} implies the existence of $\frac{1}{2}$-MMS. \citet{ghodsi2018fair} provided an instance with binary XOS valuations to show that the upper bound of MMS approximation ratio is $\frac{1}{2}$. We improve the lower bound to $\frac{1}{2}$, which makes the bounds match.

\begin{corollary}\label{MMScorollary}
There exists a $\frac{1}{2}$-MMS allocation under binary XOS valuations, which can be computed in polynomial time.
\end{corollary}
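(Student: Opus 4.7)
The plan is to derive Corollary \ref{MMScorollary} as an essentially immediate consequence of Theorem \ref{APStheorem6}, using the standard comparison between APS and MMS in the symmetric setting. There is no real obstacle here; the only work is plugging in the right entitlement vector and invoking a known inclusion.

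First, I would specialize the fair allocation instance to the symmetric setting by taking $b_i = \tfrac{1}{n}$ for every $i \in [n]$, so that the instance $(N, M, (v_i)_{i \in [n]}, (\tfrac{1}{n})_{i \in [n]})$ is exactly the one in which $\MMS_i$ is defined (Definition \ref{defMMS}). This specialization is harmless: Algorithm \ref{APSpoly5} only requires that $\sum_i b_i = 1$, and Theorem \ref{APStheorem6} applies verbatim. It therefore produces, in time $O(mn(m+\log n))$ by Lemma \ref{APSpoly}, an allocation $A = (A_1,\dots,A_n)$ satisfying $v_i(A_i) \ge \tfrac{1}{2}\APS_i$ for every agent $a_i$.

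Next, I would invoke the relation $\MMS_i \le \APS_i$ in the symmetric setting, which is the well-known inequality $\text{MMS} \le \text{APS}$ mentioned already in the introduction of the paper (and established in \citet{KulkarniKM24}). Chaining this with the previous bound gives
\begin{equation*}
v_i(A_i) \;\ge\; \tfrac{1}{2}\APS_i \;\ge\; \tfrac{1}{2}\MMS_i
\end{equation*}
for every $a_i$, so the allocation returned by Algorithm \ref{APSpoly5} is automatically a $\tfrac{1}{2}$-MMS allocation, computed in polynomial time.

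Finally, I would recall that \citet{ghodsi2018fair} exhibits a binary XOS instance in which no allocation achieves a better-than-$\tfrac{1}{2}$ approximation of MMS, so the ratio $\tfrac{1}{2}$ supplied by our construction is tight and the lower and upper bounds match. As noted above, the argument has no hard step; the entire content of the corollary is packaged in Theorem \ref{APStheorem6} together with the MMS/APS inequality in the symmetric case.
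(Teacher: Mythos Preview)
Your proposal is correct and essentially identical to the paper's own proof: specialize to equal entitlements $b_i = \tfrac{1}{n}$, apply Theorem \ref{APStheorem6}, and chain with $\MMS_i \le \APS_i$. The only cosmetic difference is that the paper attributes the inequality $\MMS_i \le \APS_i$ (for equal entitlements and arbitrary non-negative valuations) to \citet{babaioff2024fair} rather than to \citet{KulkarniKM24}; either citation suffices here.
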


\begin{proof}
    \citet{babaioff2024fair} proved that when all agents have equal entitlements $\frac{1}{n}$, the APS value is no less than MMS value with any non-negative valuations. 

    Our Theorem \ref{APStheorem6} holds for arbitrary entitlement, so when the agents' entitlements are equal, the existence of a $\frac{1}{2}$-APS allocation also holds and we still have a polynomial-time algorithm to compute it. This implies the existence of $\frac{1}{2}$-MMS allocation and a polynomial-time algorithm. 
\end{proof}

\section{WMMS with General XOS Valuations}

In this section, we first show that APS is no longer an upper bound of WMMS, if agents have asymmetric entitlements. 
In fact, there are instances where WMMS is arbitrarily larger than APS, and thus the algorithms in the previous section do not have any approximation guarantee on WMMS.

\begin{proposition} \label{WMMSAPS}
    For any $\delta > 0$, there exists a fair allocation instance $(N, M, (v_i)_{i\in [n]}, (b_i)_{i\in [n]})$ with additive valuations where there is an agent $a_i\in N$ that
    \begin{equation*}
        \frac{\APS_i}{\WMMS_i} < \delta.
    \end{equation*}
\end{proposition}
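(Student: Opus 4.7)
The plan is to exhibit an explicit two-agent, two-item instance with additive valuations in which $\APS_1 = 0$ while $\WMMS_1 > 0$, so that the ratio $\APS_1 / \WMMS_1$ is exactly $0$ and is therefore smaller than every positive $\delta$. The construction does not need a parameter depending on $\delta$; a single instance suffices because $0 < \delta$ for all $\delta > 0$.

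First I would set $n = 2$ and $m = 2$, let both agents have the additive valuation $v_i(\{g\}) = 1$ for each item $g \in \{g_1, g_2\}$, and fix entitlements $b_1 = \frac{1}{2} - \epsilon$ and $b_2 = \frac{1}{2} + \epsilon$ for some $\epsilon \in (0, \frac{1}{2})$. The design point is that $b_1$ falls strictly below $1/m = 1/2$, which is what lets a uniform price vector rule out every nontrivial purchase for agent $1$.

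To upper bound $\APS_1$, I would plug in the uniform price vector $p = (1/2,\, 1/2) \in \mathcal{P}$. Under this pricing, every singleton costs $1/2 > b_1$, so only $\emptyset$ is affordable, forcing the inner maximum to be $v_1(\emptyset) = 0$. Together with the trivial lower bound $\APS_1 \ge 0$, this gives $\APS_1 = 0$. To lower bound $\WMMS_1$, I would display the balanced partition $S = (\{g_1\}, \{g_2\})$; then
\[
\min\!\left( v_1(S_1)\cdot \tfrac{b_1}{b_1},\ v_1(S_2)\cdot \tfrac{b_1}{b_2} \right) \;=\; \frac{1/2 - \epsilon}{1/2 + \epsilon} \;>\; 0,
\]
so $\WMMS_1 > 0$ and the ratio $\APS_1/\WMMS_1 = 0 < \delta$.

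There is essentially no obstacle once the instance is chosen correctly; both verifications are one-line calculations. The conceptual point — and the reason the construction works — is that the adversarial APS price is allowed to concentrate mass on items that an under-entitled agent cannot individually afford, whereas the WMMS partition is unconstrained by pricing and can still hand agent $1$ a share that is merely discounted by the factor $b_1/b_2$.
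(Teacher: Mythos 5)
Your proof is correct. The single two-agent instance works: under the uniform price $(1/2,1/2)$ agent $1$ with entitlement $b_1 = \tfrac12-\epsilon < \tfrac12$ can afford only the empty set, so $\APS_1 = 0$, while the partition $(\{g_1\},\{g_2\})$ certifies $\WMMS_1 \ge \tfrac{b_1}{b_2} > 0$, and $0 < \delta$ for every $\delta>0$. The mechanism is the same one the paper exploits -- the adversarial price can charge an item more than the agent's entitlement -- but your route is more elementary and in a sense stronger: the paper builds a family of $n$-agent instances parametrized by an $\varepsilon$ chosen as a function of $\delta$, where the distinguished agent $a_n$ has $\APS_n \le (n-1)\varepsilon$ strictly positive and $\WMMS_n = 1-(n-1)\varepsilon$, so the ratio is small but nonzero and one instance per $\delta$ is needed; you instead exhibit one fixed instance where the ratio is exactly $0$, which covers all $\delta$ simultaneously. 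The only thing your construction gives up is the (unclaimed) refinement that the ratio can be made arbitrarily small while keeping $\APS_i > 0$; since the proposition as stated only asks for $\APS_i/\WMMS_i < \delta$, your argument fully suffices.
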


\begin{proof}
    We prove by constructing a fair allocation instance $(N, M, (v_i)_{i\in [n]}, (b_i)_{i\in [n]})$ as follows. There are $n$ agents and $m = n$ items to be allocated. Each agent $a_i\in N$ has the same additive valuation
    \begin{equation*}
        v_i(\{g_j\}) = \begin{cases}
            \varepsilon, &j\in [n - 1], \\
            1 - (n - 1)\varepsilon, &j = n,
        \end{cases}
    \end{equation*}
    where $\varepsilon > 0$ is a small value to be determined. The entitlements of the agents are
    \begin{equation*}
        b_1 = b_2 = \cdots = b_{n - 1} = \varepsilon, b_n = 1 - (n - 1)\varepsilon.
    \end{equation*}

    It is obvious that the WMMS partition of all the agents is to give each of the first $n - 1$ agents one item in $\{g_1, g_2, \dots, g_{n - 1}\}$ and to give $a_n$ the item $g_n$. Therefore, the WMMS value of $a_n$ is
    \begin{equation*}
        \WMMS_n = 1 - (n - 1)\varepsilon.
    \end{equation*}
    However, as to the APS value, consider the price
    \begin{equation*}
        p_1 = p_2 = \cdots = p_{n - 1} = 0, p_n = 1.
    \end{equation*}
    Now $a_n$ could not afford the item $g_n$, so
    \begin{equation*}
        \APS_n \le (n - 1)\varepsilon.
    \end{equation*}
    For any $\delta > 0$, take
    \begin{equation*}
        \varepsilon < \frac{1}{n - 1} \frac{\delta}{\delta + 1},
    \end{equation*}
    then for agent $a_n$,
    \begin{equation*}
        \frac{\APS_n}{\WMMS_n} \le \frac{(n-1)\varepsilon}{1-(n-1)\varepsilon} <\delta,
    \end{equation*}
    which proves the proposition.
\end{proof}

\citet{farhadi2019fair} showed the existence of $\frac{1}{n}$-WMMS allocation for agents with general additive valuations and an instance where no allocation better than $\frac{1}{n}$-WMMS approximation exists. In the following, we show that a $\frac{1}{n}$-WMMS allocation exists for agents with general XOS valuations, which extends the previous result. As to the upper bound, we prove that no algorithm can be better than $\frac{1}{n}$-WMMS for binary XOS valuations. However, if all agents have binary additive valuations, an exact WMMS allocation can be found in polynomial time.

% Turn back to the WMMS criterion. We start with the $\frac{1}{n}$-WMMS upper bound, which would be intuitive. Then we give the proof of the $\frac{1}{n}$-WMMS existence.

\subsection{Computing $\frac{1}{n}$-WMMS Allocations}

For a fair allocation instance $(N,M,(v_i)_{i\in [n]},(b_i)_{i\in [n]})$, without loss of generality, let $b_1\ge b_2\ge \cdots \ge b_n$. 
Denote by $S^i$ the WMMS partition of $a_i$ as in Definition \ref{defWMMSpart}. It is easy to find that 
\begin{equation*}
    v_i(S^i_j) \ge \WMMS_i, \forall j\in [i], i\in [n].
\end{equation*}
Therefore, among the WMMS partition of $a_i$, there are at least $i$ bundles which satisfy $a_i$. This observation leads to Algorithm \ref{WMMSexist}. Intuitively, in Algorithm \ref{WMMSexist}, we adopt a Round-Robin-like process, where every agent obtains an item in a round. Then in the first round, before $a_i$'s turn, only $(i - 1)$ items have been taken. So each agent $a_i$ can choose a complete bundle in $\{S^i_j, j\in [i]\}$ as her target in the first round. Overall, each agent has a bundle that satisfies her WMMS and she is the first one to select items in the bundle during the Round-Robin-like process. With the help of the additive function that maximizes the XOS valuation for the bundle, each agent can get a bundle of at least $\frac{1}{n}$ of her WMMS.

Inheriting the notation of Definition \ref{defWMMSpart}, now we give a concrete description of Algorithm \ref{WMMSexist}.

\begin{algorithm}
\caption{$\frac{1}{n}$-WMMS Existence for General XOS Valuations} \label{WMMSexist}
\begin{algorithmic}[1]
\Require An instance with general XOS valuations $(N,M,(v_i)_{i\in [n]},(b_i)_{i\in [n]})$, the WMMS value $(\WMMS_i)_{i\in [n]}$ and the WMMS partition $(S^i)_{i\in [n]}$.
\Ensure A $\frac{1}{n}$-WMMS allocation $A=(A_1, A_2, \ldots,A_n)$.
%\Function{WMMSExistence}{$(N,M,(v_i)_{i\in [n]}, (b_i)_{i\in [n]}), (\WMMS_k)_{k\in [n]}, (S^k)_{k\in [n]}$}
\State Rename the agents so that $b_1\ge b_2\ge \cdots\ge b_n$.
\State Initialize $n$ empty bundles $(A_i\gets \varnothing)_{i\in [n]}$.
\State Set $r\gets 1$. \Comment{Round counter.}
\While {$N \neq \varnothing$}
\For {$i\gets 1, 2, \dots, n$}
\If {$r = 1$}
\State Select a bundle $B^i$ from $(S^i_j)_{j\in [i]}$, where $g_1, g_2, \dots, g_{i-1}\notin B^i$. \Comment{The bundle is still complete.} \label{WMMSbundle}
\State Suppose $v_i(S)=\max_{t\in L}l_t(S)$ for $S\subseteq M$, where $(l_t)_{t\in L}$ are additive functions. Let $t_0 \gets \mathop{\arg\max}_{t\in L} l_t(B^i)$ and $l^i \gets l_{t_0}$. \label{WMMSXOS}
\State Set $g_i\gets \mathop{\arg\max}_{e\in B^i} l^i(e)$. 
\EndIf
\If {$a_i\in N$}
\If {$B^i \neq \varnothing$} \Comment{Round-Robin-like process} 
\State Set $g\gets \mathop{\arg\max}_{e\in B^i} l^i(e)$. \label{WMMSroundrobin}
\State Set $A_i\gets A_i\cup \{g\}$.
\State Set $B^j\gets B^j\setminus \{g\}$ for $j\in [n]$, if $B^j$ is already selected.
\State Set $M\gets M\setminus \{g\}$.
\Else
\State Set $N\gets N\setminus \{a_i\}$. \Comment{Agent $a_i$ finishes.}
\EndIf
\EndIf
\EndFor
\State Set $r\gets r + 1$.
\EndWhile
\If {$M\neq \varnothing$}
\State Allocate the remaining goods arbitrarily.
\EndIf
\Return $A = (A_1, A_2, \dots, A_n)$.
%\EndFunction
\end{algorithmic}
\end{algorithm}

\begin{theorem}\label{WMMStheorem1}
    For any fair allocation instance $(N, M, (v_i)_{i\in [n]}, (b_i)_{i\in [n]})$ with XOS valuations, there exists a $\frac{1}{n}$-WMMS allocation. 
\end{theorem}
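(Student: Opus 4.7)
The plan is to analyze Algorithm \ref{WMMSexist} directly and show each agent gets at least $\frac{1}{n}\WMMS_i$. First I would justify the bundle selection in line \ref{WMMSbundle}. Since the agents are sorted with $b_1\ge b_2\ge\cdots\ge b_n$, for $j\le i$ the definition of WMMS gives $v_i(S^i_j)\ge \WMMS_i\cdot b_j/b_i\ge \WMMS_i$. Moreover, the items $g_1,\dots,g_{i-1}$ that agents $a_1,\dots,a_{i-1}$ pick in round $1$ lie in at most $i-1$ of the $n$ parts of the partition $S^i$, so among the $i$ bundles $S^i_1,\dots,S^i_i$ at least one avoids $\{g_1,\dots,g_{i-1}\}$ entirely and serves as a valid choice for $B^i$. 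Denote this round-$1$ bundle by $B^i$ and note that $v_i(B^i)\ge \WMMS_i$. Fixing the XOS witness $l^i$ from line \ref{WMMSXOS}, we have $l^i(B^i)=v_i(B^i)\ge \WMMS_i$.

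Next I would reduce the whole statement to the inequality $l^i(A_i)\ge \tfrac{1}{n}l^i(B^i)$. Since $A_i\subseteq B^i$ (every item agent $a_i$ picks comes from her current $B^i$, which only shrinks) and since $v_i(T)\ge l^i(T)$ for any $T$ by the XOS definition, this inequality immediately yields $v_i(A_i)\ge l^i(A_i)\ge \tfrac{1}{n}l^i(B^i)\ge \tfrac{1}{n}\WMMS_i$.

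To prove $l^i(A_i)\ge \tfrac{1}{n}l^i(B^i)$, I would use a charging argument over the rounds of the while loop. Let $f_c$ be agent $a_i$'s pick in round $c$, defined for $c=1,\dots,C$, where $C$ is the last round before $B^i$ becomes empty (at which point all of $B^i$ has been taken by someone, so every item in $B^i$ appears in exactly one pick across rounds $1,\dots,C$). Each item $g\in B^i$ is assigned to a specific $f_c$: (i) $g=f_c$ is assigned to itself; (ii) if $g$ was picked by some $a_j$ with $j>i$ in round $c$, then $g$ was still in agent $a_i$'s current $B^i$ when she picked in that same round, so $l^i(f_c)\ge l^i(g)$, and we assign $g$ to $f_c$; (iii) if $g$ was picked by some $a_j$ with $j<i$ in round $c$, then $g$ was still in $B^i$ at agent $a_i$'s turn in round $c-1$, so $l^i(f_{c-1})\ge l^i(g)$, and we assign $g$ to $f_{c-1}$. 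The boundary case $c=1,j<i$ does not arise, precisely because the round-$1$ bundle $B^i$ was chosen to exclude $g_1,\dots,g_{i-1}$. Each $f_c$ collects at most $1+(n-i)+(i-1)=n$ items under this scheme, so summing over $c$ gives $n\sum_{c=1}^{C}l^i(f_c)\ge \sum_{g\in B^i}l^i(g)=l^i(B^i)$, i.e.\ $l^i(A_i)\ge \tfrac{1}{n}l^i(B^i)$.

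The main obstacle is the charging argument in the last paragraph: one has to set up the accounting so that every item in $B^i$ is assigned, so that each $f_c$ is assigned at most $n$ items, and so that in each case the assigned item was provably still in $B^i$ at the moment $f_c$ was picked (which is what makes $l^i(f_c)$ at least as large). In particular one must be careful that the edge case in round $1$ is handled by the selection rule for $B^i$, and that the argument still works when later agents have already left $N$ (they simply contribute fewer items to the chunk, which only helps). Once this accounting is pinned down, the remaining steps are immediate.
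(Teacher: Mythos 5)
Your proposal is correct and follows essentially the same route as the paper: pick a still-complete bundle among $(S^i_j)_{j\in[i]}$ (which all satisfy $v_i(S^i_j)\ge\WMMS_i$ by the entitlement ordering), fix the XOS witness $l^i$ maximizing the value of $B^i$, and show via the round-robin greedy rule that each of $a_i$'s picks dominates the at most $n-1$ other picks in the window around it, giving $l^i(A_i)\ge\frac{1}{n}l^i(B^i)$. Your explicit charging scheme is just a more carefully spelled-out version of the paper's "between consecutive turns of $a_i$" argument.
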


\begin{proof}
    Algorithm \ref{WMMSexist} admits the existence of $\frac{1}{n}$-WMMS allocation. First we claim that
    \begin{equation*}
        v_i(S^i_j)\ge \WMMS_i, \forall j\in [i], i\in [n].
    \end{equation*}
    Recall the definition \ref{defWMMSpart} of WMMS partition, the WMMS partition of $a_i$ is
    \begin{equation*}
        S^i = \mathop{\arg\max}_{S\in \Pi_n} \min_{j\in [n]} v_i(S^i_j) \frac{b_i}{b_j},
    \end{equation*}
    and $S^i_j$ is the bundle $a_i$ wants to assign to $a_j$. By the definition \ref{defWMMS} of WMMS,
    \begin{equation*}
        v_i(S_j^i) \frac{b_i}{b_j} \ge \WMMS_i, \forall j\in [n].
    \end{equation*}
    With the assumption that $b_1\ge b_2\ge \cdots\ge b_n$, we have $v_i(S_j^i)\ge \WMMS_i$ for $j\in [i]$.
    
    This means that every agent $a_i$ starts from a bundle $B^i$ where $v_i(B^i)\ge \WMMS_i$ in line \ref{WMMSbundle}. As each agent could only get one item in a round, before the first round of $a_i$, only $(i - 1)$ items have been taken away by the previous agents. Therefore, the $i$ bundles $(S^i_j)_{j\in [i]}$ for $a_i$ could guarantee the existence of $B^i$.
    
    As to the approximation ratio, the agent $a_i$ is the first one to take goods from the bundle $B^i$, and she will always select the most valuable one according to the additive function $l^i$ as in line \ref{WMMSXOS}. Let $l^i|_{B^i}$ be an additive function where
    \begin{equation*}
        l^i|_{B^i}(e) = \begin{cases}
            l^i(e), &e\in B^i, \\
            0, &e\notin B^i.
        \end{cases}
    \end{equation*}
    Suppose $a_i$ receives some item $g^r$ at the $r$-th round, then by the selection of line \ref{WMMSroundrobin}, we have $l^i|_{B^i}(g^r) \ge l^i|_{B^i}(e^r)$ for all items $e^r$ that any other agent could get between the $r$-th round and the $(r + 1)$-th round of $a_i$. As no agent takes any item from $B^i$ until $a_i$ does in the first round and there are $n$ agents in total, this process indicates
    \begin{equation*}
        l^i (A_i) = l^i|_{B^i} (A_i) \ge \frac{1}{n} l^i|_{B^i} (B^i) = \frac{1}{n} l^i (B^i).
    \end{equation*}
    By the definition of XOS functions, $v_i(S)\ge l^i(S), \forall S\subseteq M$. In summary, we have
    \begin{equation*}
        v_i(A_i)\ge l^i (A_i)\ge \frac{1}{n} l^i (B^i) = \frac{1}{n} v_i(B^i) \ge \frac{1}{n} \WMMS_i.
    \end{equation*} 
    Therefore, the allocation $A$ is a $\frac{1}{n}$-WMMS allocation for the instance.
\end{proof}

\subsection{Upper Bound}

The tight example exploits the fact that the agent with the largest entitlement may only have one bundle satisfying her WMMS.

\begin{theorem}\label{WMMSexample}
    For any $n\ge 2$, there exists a fair allocation instance of $n$ agents with binary XOS valuations where no allocation is better than $\frac{1}{n}$-WMMS.
\end{theorem}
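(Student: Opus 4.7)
My plan is to exhibit an explicit instance and show, by a pigeonhole on a ``scarce'' subset of items, that every allocation must shortchange some agent. Take $n$ agents with entitlements $b_1 = 1 - (n-1)\varepsilon$ and $b_i = \varepsilon$ for $i \geq 2$, for a small $\varepsilon > 0$ (say $\varepsilon \leq 1/(2n)$). Take $m = 2n-1$ items and split $M = C \cup P$ with $C = \{g_1, \ldots, g_n\}$ and $P = \{g_{n+1}, \ldots, g_{2n-1}\}$. Give agent $1$ the binary XOS valuation
\[
    v_1(S) = \max\bigl(|S \cap C|,\ \max_{j=1,\ldots,n-1}|S \cap \{g_{n+j}\}|\bigr),
\]
which is the maximum of one additive template $C$ together with $n-1$ singleton templates drawn from $P$. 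Give every agent $i \geq 2$ the binary additive valuation $v_i(S) = |S \cap C|$.

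First I will compute the WMMS values. For agent $1$, the partition $(C, \{g_{n+1}\}, \ldots, \{g_{2n-1}\})$ has bundle values $(n, 1, \ldots, 1)$; once $\varepsilon$ is small enough that $(1-(n-1)\varepsilon)/\varepsilon \geq n$, the weighted min equals $n$, so $\WMMS_1 = n$ (the matching upper bound $\WMMS_1 \leq v_1(M) = n$ is immediate). For each $i \geq 2$, because $v_i$ depends only on $S \cap C$ and $|C| = n$, the best partition places exactly one $C$-item in every bundle, giving $\WMMS_i = \varepsilon/(1-(n-1)\varepsilon)$; any other partition leaves some bundle with $|S_j \cap C| = 0$ and collapses the min to $0$.

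Next I will derive the impossibility bound. Suppose some allocation $A=(A_1,\ldots,A_n)$ is $\alpha$-WMMS for some $\alpha > 1/n$. Agent $1$ then satisfies $v_1(A_1) \geq \alpha n > 1$; since $v_1(A_1) \leq \max(|A_1 \cap C|, 1)$ and $v_1$ is integer-valued, this forces $|A_1 \cap C| \geq 2$. For each $i \geq 2$ we have $v_i(A_i) \geq \alpha \cdot \WMMS_i$, and for $\varepsilon$ small this right-hand side is strictly less than $1$; integrality of $v_i$ then forces $|A_i \cap C| \geq 1$. Summing over $i$ yields $\sum_{i=1}^n |A_i \cap C| \geq 2 + (n-1) = n+1$, contradicting $|C| = n$. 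Hence no better-than-$\tfrac{1}{n}$ WMMS allocation exists.

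The main obstacle is calibrating agent $1$'s valuation and the entitlements so that two integer thresholds land in the right places simultaneously: $\WMMS_1 = n$ (so that $\alpha > 1/n$ pushes her demand on $C$ up to $\geq 2$) while for every other agent $\alpha \cdot \WMMS_i < 1$ (so that each of them is forced to demand $\geq 1$ item from the very same $C$). The role of the private items $P$ is to keep $\WMMS_1$ non-trivial by preventing the all-of-$C$-to-bundle-$1$ partition from leaving neighbor bundles empty, without contributing anything to the scarce resource side of the counting argument.
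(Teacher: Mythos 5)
Your proof is correct and follows essentially the same construction as the paper's: a scarce set $C$ of $n$ items, $n-1$ padding items, one heavily entitled agent with $\WMMS = n$ who must capture at least two items of $C$ to exceed the $\frac{1}{n}$ ratio, and $n-1$ lightly entitled agents who each must capture at least one, which is impossible since $|C|=n$. Your minor variations (binary additive rather than unit-demand valuations for the small agents, $\varepsilon$-parameterized entitlements instead of the exact fractions $\frac{1}{2n-1}$ and $\frac{n}{2n-1}$, and the explicit pigeonhole count $\sum_i |A_i\cap C|\ge n+1 > |C|$ in place of the paper's informal case split) are cosmetic and do not change the argument.
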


\begin{proof}
    Let $n$ be the number of agents, we construct a fair allocation instance that no more than $\frac{1}{n}$-WMMS allocation exists. Consider an instance $(N, M, (v_i)_{i\in [n]}, (b_i)_{i\in [n]})$ as follows. Here we set $m = 2n - 1$ and $M=\{g_1,g_2,\ldots,g_{2n-1}\}$. For any $S\subseteq M$, the value functions of agents are
    \begin{align*}
        v_1(S) =& v_2(S) = \cdots = v_{n - 1}(S) \notag \\
        =& \max\{|S\cap\{g_1\}|, |S\cap\{g_2\}|, \dots, |S\cap\{g_n\}|\}, \\
        v_n(S) =& \max\{|S\cap\{g_1, g_2, \dots, g_n\}|, \notag \\
        &|S\cap\{g_{n + 1}\}|, |S\cap\{g_{n + 2}\}|, \dots, |S\cap\{g_{2n - 1}\}|\}. 
    \end{align*}
    The entitlements of the agents are 
    \begin{equation*}
        b_1 = b_2 = \cdots = b_{n - 1} = \frac{1}{2n - 1}, b_n = \frac{n}{2n - 1}.
    \end{equation*}

    It is straightforward to observe that the valuations are binary XOS. We then calculate the WMMS value of the $n$ agents. In spite of the imbalance in the entitlement, the only reasonable allocation for $(a_i)_{i\in [n - 1]}$ is to give each of the $n$ agents one item in $\{g_1, g_2, \dots, g_n\}$ and allocate the remaining items arbitrarily. Otherwise, there will be at least one agent receiving a zero-valued bundle, which makes her minimum value in WMMS definition be $0$. By definition \ref{defWMMS},
    \begin{equation*}
        \WMMS_1 = \WMMS_2 = \cdots = \WMMS_{n - 1} = \frac{1}{n}.
    \end{equation*}
    %However, this allocation will make the approximation ratio of agent $n$ be $\frac{1}{n}$. 
    
    For $a_n$, since she has $n$ times as much entitlement as the others, according to her value function, her favorable allocation is to receive $\{g_1, g_2, \dots, g_n\}$ herself and give every other agent an item in $\{g_{n + 1}, g_{n + 2}, \dots, g_{2n - 1}\}$. Therefore, 
    \begin{equation*}
        \WMMS_n = n.
    \end{equation*}

    As to the allocation, to ensure that the value every agent receives is greater than $0$, each of the first $n - 1$ agents must obtain at least an item in $\{g_1, g_2, \dots, g_n\}$. Without loss of generality, we will assign the items $\{g_1, g_2, \ldots, g_{n-1}\}$ to the first $n-1$ agents, which makes agent $n$ could only receive a bundle $S\subseteq\{g_{n}, g_{n+1}, \dots, g_{2n-1}\}$ with value $1$. In summary, either for some agent in $(a_i)_{i\in [n - 1]}$, the allocation is $0$-WMMS, or for $a_n$ the allocation is at most $\frac{1}{n}$-WMMS, which finishes the proof.
\end{proof}

\subsection{Binary Additive Valuations}

In contrast to the $\frac{1}{n}$-WMMS bound for general additive valuations \cite{farhadi2019fair} and binary XOS valuations, we present a polynomial-time algorithm which outputs an exact WMMS allocation for agents with binary additive valuations. For a fair allocation instance $(N,M,(v_i)_{i\in [n]},(b_i)_{i\in [n]})$ with binary additive valuations, let
\begin{equation*}
    D^i = \{g\in M| v_i(g) = 1\},
\end{equation*}
then the valuation of agent $a_i$ is $v_i(S) = |S\cap D^i|, \forall S\subseteq M$. 

We begin with Algorithm \ref{WMMSpartAddAlgo}, which computes the WMMS partition, hence the WMMS value, of any agent in polynomial time.

\begin{algorithm}
\caption{WMMS Partition for Binary Additive Valuations} \label{WMMSpartAddAlgo}
\begin{algorithmic}[1]
\Require An instance with binary additive valuations $(N,M,(v_i)_{i\in [n]},(b_i)_{i\in [n]})$ and an agent $a_i\in N$.
\Ensure A WMMS partition $S^i = (S^i_j)_{j\in [n]}$ of agent $a_i$.
%\Function{WMMSExistence}{$(N,M,(v_i)_{i\in [n]}, (b_i)_{i\in [n]}), (\WMMS_k)_{k\in [n]}, (S^k)_{k\in [n]}$}
\State Initialize $n$ empty bundles $(S^i_j \gets \varnothing)_{j\in [n]}$.
\State Let $D^i \gets \{g\in M| v_i(g) = 1\}$.
\For {$g\in D^i$}
\State Set $j_{\min} \gets \mathop{\arg\min}_{j\in [n]} \frac{|S^i_j|}{b_j}$, break tie arbitrarily.
\State Set $S^i_{j_{\min}} \gets S^i_{j_{\min}}\cup\{g\}$.
\EndFor
\State Allocate goods in $M\setminus D_i$ arbitrarily.
\State \Return $S^i = (S^i_1, S^i_2, \dots, S^i_n)$.
%\EndFunction
\end{algorithmic}
\end{algorithm}

\begin{lemma}\label{WMMSpartAddThm}
    For any fair allocation instance $(N,M,(v_i)_{i\in [n]},(b_i)_{i\in [n]})$ with binary additive valuations, Algorithm \ref{WMMSpartAddAlgo} returns a WMMS partition of agent $a_i$ in polynomial time. 
\end{lemma}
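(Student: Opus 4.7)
My plan is to prove the lemma in two parts: (i) the algorithm runs in polynomial time, and (ii) the returned partition $S^i$ is a WMMS partition of $a_i$. The runtime bound is straightforward: computing $D^i$ takes $O(m)$; the for-loop over $D^i$ runs at most $m$ times and each iteration scans the $n$ current bundle sizes to locate $j_{\min}$ in $O(n)$; the arbitrary placement of $M \setminus D^i$ is another $O(m)$. This gives a total cost of $O(mn)$.

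For correctness, since $v_i$ is binary additive we have $v_i(S^i_j) = |S^i_j \cap D^i|$ for every $j$, so goods in $M \setminus D^i$ contribute nothing to any $v_i(S^i_j)$ and may be placed arbitrarily. Writing $L_j := |S^i_j \cap D^i|$, the task reduces to constructing an integer partition $(L_j)_{j \in [n]}$ of $|D^i|$ items that maximizes $\min_{j \in [n]} L_j/b_j$; multiplied by $b_i$ this gives $\WMMS_i$. I will show the greedy water-filling rule yields such an optimal partition. Let $c^* = \max \min_j L_j/b_j$, fix an optimal witness $T = (T_j)$ with $T_j/b_j \ge c^*$ for every $j$ and $\sum_j T_j = |D^i|$, and denote the greedy output by $(L^g_j)$. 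Suppose toward contradiction that $L^g_{j^*}/b_{j^*} < c^*$ for some $j^*$; then $T_{j^*} > L^g_{j^*}$, and since $\sum_j T_j = \sum_j L^g_j$ there must exist some $j_1 \neq j^*$ with $T_{j_1} < L^g_{j_1}$, which forces $L^g_{j_1} \ge T_{j_1} + 1 \ge 1$.

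Consider the last step $\tau$ at which greedy assigned a good to $j_1$, which exists because $L^g_{j_1} \ge 1$. Just before this step the state $L^{(\tau-1)}$ satisfies $L^{(\tau-1)}_{j_1} = L^g_{j_1} - 1 \ge T_{j_1}$, so $L^{(\tau-1)}_{j_1}/b_{j_1} \ge T_{j_1}/b_{j_1} \ge c^*$. By the greedy selection rule at time $\tau$, $L^{(\tau-1)}_{j^*}/b_{j^*} \ge L^{(\tau-1)}_{j_1}/b_{j_1} \ge c^*$, and since bundles only grow we obtain $L^g_{j^*}/b_{j^*} \ge L^{(\tau-1)}_{j^*}/b_{j^*} \ge c^*$, contradicting the assumption. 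The main technical step is picking the right triple $(j^*, j_1, \tau)$ so that the greedy invariant can be applied against the optimal partition $T$; once that is in place, the contradiction is immediate. The only edge case is $c^* = 0$, for which every partition is trivially optimal. Together with the $O(mn)$ time bound, this shows Algorithm \ref{WMMSpartAddAlgo} returns a WMMS partition of $a_i$ in polynomial time.
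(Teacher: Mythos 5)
Your proof is correct and follows essentially the same route as the paper's: both hinge on the observation that at the last step in which the greedy rule adds a good to a bundle, that bundle is the current minimizer of $|S^i_j|/b_j$ — the paper packages this as the invariant $\frac{|S^{i,r}_j|-1}{b_j}\le L^{i,r}$ and you apply it directly in an exchange against an optimal partition $T$. The counting step (a deficient bundle $j^*$ forces an over-filled bundle $j_1$ with $T_{j_1}<L^g_{j_1}$) and the resulting contradiction match the paper's argument, so only the packaging differs.
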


\begin{proof}
    After the $r$-th step of the for loop, denote the $n$ bundles we assigned are $S^{i, r} = (S^{i, r}_j)_{j\in [n]}$, then the set of items we have assigned are $M^{i, r} = \cup_{j\in[n]} S^{i, r}_j$. 
    
    Let $L^{i,r}$ denote the bundle whose value to agent $a_i$, divided by the entitlement $b_j$ corresponding to its index $j$, is the smallest. That is

    \begin{equation*}
        L^{i, r} = \min_{j\in [n]} \frac{|S^{i, r}_j|}{b_j}.
    \end{equation*}
    We claim that for all the for loop  index $r$, $L^{i, r}$ satisfies
    \begin{equation} \label{WMMSpartAddIneq}
        \frac{|S^{i, r}_j| - 1}{b_j}\le L^{i,r}\le \frac{|S^{i, r}_j|}{b_j}, \forall j\in [n].
    \end{equation}
    
    The right side of inequality \ref{WMMSpartAddIneq} definitely holds according to the definition of $L^{i,r}$. We will prove the left side of inequality \ref{WMMSpartAddIneq} by contradiction. We assume that there is a bundle $S_k^{i,r}$ satisfies $\frac{|S^{i, r}_k| - 1}{b_k}> L^{i,r}$ and that the last item in bundle $S_{k}^{i,r}$ was assigned to it during the $r^\prime$-th for loop. After $r^\prime$-th round, there is
    \begin{align*}
        \frac{|S^{i, r^\prime}_k|}{b_k}=\frac{|S^{i, r}_k| - 1}{b_k}> L^{i,r}\ge L^{i,r^\prime}.
    \end{align*}
    The last inequality holds since, according to Algorithm \ref{WMMSpartAddAlgo}, the number of items in each bundle is non-decreasing, and then the value of $L^{i,r}$ is non-decreasing in $r$. Since $\frac{|S^{i, r^\prime}_k|}{b_k}$ is strictly larger than $L^{i,r^\prime}$, in this round, the item will not assigned to $S^{i,r^\prime}_{k}$ according to Algorithm \ref{WMMSpartAddAlgo}, which contradicts our assumption. We have proved the left side of Inequality \ref{WMMSpartAddIneq}.
    
    Next, we will show that for any for loop index $r\in [|D^i|]$, the set of bundles $S^{i, r}$ is a WMMS partition for the set of assigned items $M^{i, r}$ for agent $a_i$. Therefore, by the definition of WMMS, we have
    \begin{equation*}
        \WMMS_i(M^{i, r}) = L^{i, r} b_i.
    \end{equation*}
    
    We will also prove this by contradiction. We assume that partition $S^{i,r}$ is not a WMMS partition for item set $M^{i,r}$. Let a WMMS partition be $T^{i, r} = (T^{i, r}_j)_{j\in [n]}$. 
    Therefore, by the definition of WMMS partition, there is
    \begin{equation*}
        \min_{j\in [n]} \frac{|T^{i, r}_j|}{b_j} > L^{i, r}.
    \end{equation*}
    
    Since $T^{i,r}$ is not equal to $S^{i,r}$, there must exist a bundle index $k$ where
    \begin{align*}
        |T^{i, r}_{k}| \le |S^{i, r}_{k}| - 1.
    \end{align*}
    According to Inequality \ref{WMMSpartAddIneq}, it means
    \begin{equation*}
        \frac{|T^{i, r}_{k}|}{b_k} \le \frac{|S^{i, r}_{k}| - 1}{b_k}\le L^{i, r},
    \end{equation*}
    which leads a contradiction.

    This finishes the proof for the correctness of Algorithm \ref{WMMSpartAddAlgo}. 
    
    For the time complexity of Algorithm \ref{WMMSpartAddAlgo}, we need to iterate through all the $m$ items, and each iteration costs a binary search in $n$ values. Overall, the time complexity is $O(m\log n)$.
\end{proof}

It is straightforward to observe that the calculation of the WMMS partition serves as the ``worst'' case allocation for each agent. That is, if we let each agent select items in her own favor according to the rule in Algorithm \ref{WMMSpartAddAlgo}, then from the point of view of a certain agent, there will be a possibility that she is indifferent to some items taken by the other agents. This observation leads to Algorithm \ref{WMMSAddAlgo}.

\begin{algorithm}
\caption{WMMS Allocation for Binary Additive Valuations} \label{WMMSAddAlgo}
\begin{algorithmic}[1]
\Require An instance with binary additive valuations $(N,M,(v_i)_{i\in [n]},(b_i)_{i\in [n]})$.
\Ensure A WMMS partition $A=(A_1, A_2, \ldots,A_n)$.
%\Function{WMMSExistence}{$(N,M,(v_i)_{i\in [n]}, (b_i)_{i\in [n]}), (\WMMS_k)_{k\in [n]}, (S^k)_{k\in [n]}$}
\State Initialize $n$ empty bundles $(A_i \gets \varnothing)_{i\in [n]}$.
\State Let $D^i \gets \{g\in M| v_i(g) = 1\}, \forall i\in [n]$.
\While {$N\neq \varnothing$}
\State Set $i_{\min} \gets \mathop{\arg\min}_{i\in [n]} \frac{|A_i|}{b_i}$.
\If {$D^{i_{\min}}\neq \varnothing$}
\State Select an arbitrary item $g\in D^{i_{\min}}$ and set $A_i\gets A_i\cup\{g\}$.
\State Set $D^i\gets D^i\setminus\{g\}, \forall i\in [n]$.
\State Set $M\gets M\setminus\{g\}$.
\Else
\State $N\gets N\setminus \{a_{i_{\min}}\}$.
\EndIf
\EndWhile
\State Allocate remaining goods in $M$ arbitrarily.
\State \Return $S^i = (S^i_1, S^i_2, \dots, S^i_n)$.
%\EndFunction
\end{algorithmic}
\end{algorithm}

\begin{theorem}
Algorithm \ref{WMMSAddAlgo} returns a WMMS allocation in polynomial time for any fair allocation instance $(N, M, (v_i)_{i\in [n]}, (b_i)_{i\in [n]})$ with binary additive valuations.
\end{theorem}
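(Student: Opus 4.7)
The plan is to prove $v_i(A_i)=|A_i|\geq \WMMS_i=b_i L_i^*$ for every agent $a_i$, where $L_i^*:=\min_j |S^i_j|/b_j$ is the value produced by Algorithm \ref{WMMSpartAddAlgo} applied to $D^i$. The equality $v_i(A_i)=|A_i|$ is immediate since every item added to $A_i$ in the while loop is drawn from $D^{i_{\min}}=D^i$. Let $t_i$ be the step at which $a_i$ is removed from $N$, and set $L:=L^{(t_i)}:=\min_{k\in N^{(t_i)}}|A_k^{(t_i)}|/b_k$; because $a_i$ is $i_{\min}$ at $t_i$, $|A_i^{(t_i)}|=b_iL$ is an integer. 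It therefore suffices to establish $L\geq L_i^*$.

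First I would maintain, throughout the execution, the invariant
\[
\frac{|A_j^{(t)}|-1}{b_j}\;\leq\; L^{(t)}\quad\text{for every }a_j\in N^{(t)},
\]
in direct analogy with Inequality \ref{WMMSpartAddIneq}. A picking step preserves it: if $a_{j^*}$ picks, her ratio advances from $L^{(t)}$ to $L^{(t)}+1/b_{j^*}$, so $(|A_{j^*}^{(t+1)}|-1)/b_{j^*}=L^{(t)}\leq L^{(t+1)}$; other agents' ratios are unchanged and $L^{(t)}$ is monotonically non-decreasing in $t$. A removal step only shrinks $N^{(t)}$, which can only raise $L^{(t)}$, again preserving the invariant. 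Two consequences follow at time $t_i$: every alive $a_j\neq a_i$ satisfies $|A_j^{(t_i)}|\leq\lfloor b_jL\rfloor+1$, and every $a_j$ removed at an earlier step $t_j$ (so $L^{(t_j)}\leq L$) satisfies $|A_j|=b_jL^{(t_j)}\leq\lfloor b_jL\rfloor$ because $|A_j|$ is an integer.

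I then argue $L\geq L_i^*$ by contradiction. Suppose $L<L_i^*$ and let $(T_j)_{j\in[n]}$ be a WMMS partition of $D^i_{\text{original}}$ achieving $\min_j|T_j|/b_j=L_i^*$, so that $|T_j|>b_jL$ and by integrality $|T_j|\geq\lfloor b_jL\rfloor+1$ for every $j$. Combined with the invariant-based bounds, this yields $|A_j^{(t_i)}|\leq|T_j|$ for every alive $a_j\neq a_i$, while the strict inequalities $|A_i^{(t_i)}|=b_iL<b_iL+1\leq|T_i|$ (using that $b_iL$ is an integer) and $|A_j^{(t_i)}|\leq\lfloor b_jL\rfloor<|T_j|$ hold for $a_i$ and for each agent removed before $t_i$. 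Since $D^i=\varnothing$ at $t_i$, every item of $D^i_{\text{original}}$ has been allocated, giving $\sum_j|A_j^{(t_i)}\cap D^i_{\text{original}}|=|D^i_{\text{original}}|=\sum_j|T_j|$; but the per-bundle bounds with a strict inequality at $a_i$ force the left-hand sum to be strictly smaller than $\sum_j|T_j|$, a contradiction. Hence $L\geq L_i^*$, and $|A_i|=b_iL\geq b_iL_i^*=\WMMS_i$.

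The main obstacle is handling the strictness in the contradiction step, which hinges on the fact that $b_iL$ is an integer (a consequence of $a_i$ being $i_{\min}$ at her own removal), whereas for a generic $j\neq i$ the quantity $b_jL$ may be non-integer and only the non-strict $|A_j|\leq|T_j|$ can be guaranteed. Isolating this single strict inequality at $a_i$ herself is what drives the contradiction. Polynomial running time then follows because the while loop performs at most $m+n$ iterations — each either allocates a good or removes an agent — with $O(n)$ work per iteration to locate $i_{\min}$ and update the sets $D^j$.
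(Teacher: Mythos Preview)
Your proof is correct, and it follows a genuinely different route from the paper's own argument. The paper gives a short coupling/domination sketch: from any fixed agent $a_i$'s perspective, the execution of Algorithm~\ref{WMMSAddAlgo} is pointwise no worse than running Algorithm~\ref{WMMSpartAddAlgo} on $D^i$, because whenever another agent picks an item outside $D^i$, that step leaves strictly more of $D^i$ available to $a_i$ than the baseline would; Lemma~\ref{WMMSpartAddThm} then finishes the job. Your argument instead re-derives the key invariant $(|A_j|-1)/b_j\le L^{(t)}$ directly for Algorithm~\ref{WMMSAddAlgo} (including across removal steps), and closes with a clean counting contradiction against a WMMS partition $(T_j)$ of $D^i$, using the integrality of $b_iL$ at $a_i$'s own removal to secure the single strict inequality needed. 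What your approach buys is rigor and self-containment: it handles the removal of other agents explicitly, which the paper's one-line coupling leaves implicit. What the paper's approach buys is brevity, since it reuses Lemma~\ref{WMMSpartAddThm} as a black box. Two minor remarks: your equality $v_i(A_i)=|A_i|$ should really be $v_i(A_i)\ge |A_i^{(t_i)}|$, since the final ``allocate remaining goods arbitrarily'' step may add zero-value items to $A_i$; this does not affect the conclusion. Also, your running-time bound $O((m+n)n)$ is coarser than the paper's $O(m\log n)$, but still polynomial as required.
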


\begin{proof}
For any agent $a_i$, we have $v_i(A_i)\ge \WMMS_i$. That is because if everybody takes the item $g$ from the set $D_i$ in each while loop, Algorithm \ref{WMMSAddAlgo} does the same work as Algorithm \ref{WMMSpartAddAlgo}. In this case, Lemma \ref{WMMSpartAddThm} have proved that $v_i(A_i)\ge \WMMS_i$. Otherwise, if some agent gets an item not in $D_i$, the remaining items in $D_i$ will be increased, and the number of items will be increased at the same time. We still have $|A_i|=v_i(A_i)\ge \WMMS_i$. The time complexity of Algorithm \ref{WMMSAddAlgo} is also $O(m\log n)$.
\begin{comment}
    For any agent $a_i$, let $M^i$ be the items allocated when the last item in $D^i$ is taken away. 
    By adding null items to $M^i$ for the agents who previously leaved, Algorithm \ref{WMMSAddAlgo} does the same work as Algorithm \ref{WMMSpartAddAlgo} as if for all $g\in M^i$, $v_i(g) = 1$. Since $D^i\subseteq M^i$, we have $v_i(A_i)\ge \WMMS_i$. The time complexity of Algorithm \ref{WMMSAddAlgo} is also $O(m\log n)$.
\end{comment}
\end{proof}

\section{Conclusion Remarks}
In this paper, we prove the tight approximation guarantees for asymmetric fair allocation with binary XOS valuations under APS and WMMS fairness criteria.
Specifically, we design a polynomial-time algorithm that computes a $\frac{1}{2}$-approximate APS allocation, which matches the upper bound of the approximation ratio.  This result also implies the existence of a 
$\frac{1}{2}$-approximate MMS allocation when agents have equal entitlements. 
When agents have different entitlements, we show that APS can be arbitrarily smaller than WMMS, even when agents have additive valuations.
We then design an algorithm that ensures a $\frac{1}{n}$-WMMS allocation and prove that a better than $\frac{1}{n}$ approximation ratio is not guaranteed. The approximation ratio holds even for general XOS valuations.
In addition, we show that an exact WMMS allocation can be achieved for binary additive valuations; this result serves to highlight that obtaining a $\frac{1}{n}$-WMMS guarantee under binary XOS valuations is non-trivial.
\begin{comment}
When agents have different entitlements, we show that APS can be arbitrarily smaller than WMMS, even when agents have additive valuations. 
We then prove that a better than $\frac{1}{n}$ approximation ratio is not guaranteed and design an algorithm that ensures a $\frac{1}{n}$-WMMS allocation. 
The approximation ratio holds even for general XOS valuations. 
\end{comment}
There are several interesting future research directions. 
First, it is still unknown if constant approximations can be ensured if agents have subadditive valuations with binary marginals.
Second, for $\APS$/$\WMMS$ with asymmetric agents under binary submodular (matroid-rank) valuations, while our binary XOS results already apply, it remains open whether strictly better approximation guarantees can be achieved beyond the symmetric-agent setting.
Third, in the current work, we have solely focused on the allocation of goods. It is interesting to investigate the minor problem of chores.

% For APS fairness, we show that a $\frac{1}{2}$-approximate APS allocation always exists, and this approximation ratio is tight. Our result also guarantees the existence of a 
% $\frac{1}{2}$-approximate MMS allocation. Furthermore, we provide a polynomial-time implementation to compute such an allocation.

% For WMMS fairness, we show that for XOS valuations -- even when restricted to binary cases -- a better than $\frac{1}{n}$-approximate WMMS allocation cannot be guaranteed. 
% Nevertheless, we proved that a $\frac{1}{n}$-approximate WMMS allocation always exists for arbitrary XOS valuations.

% It would be interesting to extend the positive result for APS and WMMS to subadditive valuations. Another interesting direction for future work is to develop the APS and WMMS results for chores with XOS valuations.

%%%%%%%%%%%%%%%%%%%%%%%%%%%%%%%%%%%%%%%%%%%%%%%%%%%%%%%%%%%%%%%%%%%%%%%%

%%% The acknowledgments section is defined using the "acks" environment
%%% (rather than an unnumbered section). The use of this environment 
%%% ensures the proper identification of the section in the article 
%%% metadata as well as the consistent spelling of the heading.

\section*{Acknowledgments}

We thank Yiwei Gao and Ankang Sun for helpful discussions.

%%%%%%%%%%%%%%%%%%%%%%%%%%%%%%%%%%%%%%%%%%%%%%%%%%%%%%%%%%%%%%%%%%%%%%%%

%%% The next two lines define, first, the bibliography style to be 
%%% applied, and, second, the bibliography file to be used.

\bibliographystyle{ACM-Reference-Format} 
\bibliography{ref}

%%%%%%%%%%%%%%%%%%%%%%%%%%%%%%%%%%%%%%%%%%%%%%%%%%%%%%%%%%%%%%%%%%%%%%%%

\end{document}